\algnewcommand\algorithmicforeach{\textbf{for each}}
\newtheorem{lemma}{Lemma}
\newtheorem{definition}{Definition}
\newtheorem{theorem}{Theorem}
\newtheorem{corollary}{Corollary}
\theoremstyle{remark}
\newtheorem{remark}{Remark}
\definecolor{FGreen}{RGB}{1,68,33}
\def\l@subsection#1#2{}
\def\l@subsubsection#1#2{}
\begin{document}

\title{Low-overhead fault-tolerant quantum computation by gauging logical operators}
\author{Dominic J.~Williamson}
\affiliation{IBM Quantum, IBM Almaden Research Center, San Jose, CA 95120, USA}
\author{Theodore J.~Yoder}
\affiliation{IBM Quantum, IBM T.J. Watson Research Center, Yorktown Heights, NY 10598, USA}
\date{July 7, 2026}

\begin{abstract}
\noindent
Quantum computation must be performed in a fault-tolerant manner to be useful in practice. Recent progress has established quantum error-correcting codes with sparse connectivity requirements and constant qubit overhead suitable for quantum memory. However, existing schemes that include fault-tolerant logical measurement on such quantum memories do not always achieve low qubit overhead. Here we present a low-overhead method to implement fault-tolerant logical measurement on a quantum error-correcting code by treating the logical operator as a physical symmetry and gauging it so that it is enforced by a product of local symmetries. The gauging measurement procedure introduces a high degree of flexibility that can be exploited to achieve a qubit overhead that is linear in the weight of the operator being measured up to a polylogarithmic factor. This flexibility also allows the procedure to be adapted to arbitrary quantum codes. Our results provide a more efficient approach to performing fault-tolerant quantum computation, making it more tractable for near-term implementation. 
\end{abstract}

\maketitle

Quantum error-correcting codes are an essential ingredient to protect quantum information in a quantum computer from errors due to coupling to an external environment~\cite{Shor1995,Steane1996,shor1996fault,gottesman1997stabilizer,aharonov1997fault,knill1998resilientQC,kitaev1997quantum,preskill1998reliable,preskill1997fault}. 
Recent progress~\cite{gottesman2014fault,tillich2014quantum,leverrier2015quantum,Panteleev2019,evra2020decodable,hastings2020fiber,panteleev2020quantum,breuckmann2020balanced,breuckmann2021ldpc} has led to the discovery in Ref.~\cite{panteleev2022asymptotically} of \textit{good} quantum low-density parity-check (qLDPC) codes~\cite{panteleev2022asymptotically,leverrier2022quantum,Dinur2023}, which have constant encoding rate and relative distance. 
Such codes are far more efficient at protecting large amounts of quantum information~\cite{panteleev2021degenerate,bravyi2024high,Xu2023} than standard approaches based on the surface code~\cite{kitaev2003fault,bravyi1998quantum,dennis2002topological}. 
An important question is whether the advantages that good qLDPC codes offer for quantum information storage come at the cost of straightforward and efficient logical quantum information processing. 

A key requirement for logical quantum gates on a quantum error-correcting code is fault tolerance -- they must function in the presence of errors while maintaining protection of the encoded quantum information throughout a computation. 
Existing approaches to performing fault-tolerant logical quantum gates can be roughly divided into code-preserving and code-deforming. 

Code-preserving gates include transversal gates~\cite{gottesman1997stabilizer}, which act via the same operator on all physical and logical qubits, and more general locality-preserving gates~\cite{doi:10.1063/1.4939783}. 
Nontrivial code-preserving gates are synonymous with symmetries of an underlying code~\cite{Bombin2006TQD,yoshida2015gapped,doi:10.1063/1.4939783,moussa2016transversal,Quintavalle2022,Webster2023,breuckmann2024fold}. 
As a consequence, such gates necessitate a degree of structure to exist in the code. 

Code-deforming gates are more general and allow a code to be deformed through a sequence of different codes to enact a logical gate~\cite{Raussendorf2007,horsman2012surface,Landahl2014,Paetznick2013,TJOC2013,Bombin2015,Anderson2014}. 
Unlike code-preserving gates, code-deforming gates are generic as they do not necessitate the existence of nontrivial symmetries of the code beyond the logical operators themselves. 
In particular, code-deforming gates allow for the fault-tolerant measurement of logical operators~\cite{horsman2012surface,Landahl2014}. 
This opens up the possibility of implementing measurement-only quantum computation, where Clifford gates and $T$-gate injections are implemented via measurement, or 
Pauli based computation (PBC)~\cite{bravyi2016trading} where a quantum circuit is compiled into a sequence of generalized magic state injections~\cite{litinski2019game}. 
Lattice surgery is a code-deforming approach to fault-tolerant computation for the surface code that has been studied extensively~\cite{horsman2012surface,Fowler2018,litinski2019game,Chamberland2021}.

There are a number of proposals for performing fault-tolerant logical gates on qLDPC codes in the literature~\cite{Zhu2023,Scruby2024,cohen2022low,Cowtan2023,Cowtan2024,cross2024linear,Zhang2024}. 
These include code-preserving gates and code-deforming gates. 
The most direct generalization of surface code lattice surgery is the scheme proposed in Ref.~\cite{cohen2022low}, and refined in Ref.~\cite{cross2024linear}, which applies to general qLDPC codes and makes use of an auxiliary system similar to a patch of surface code. 
The auxiliary system itself incurs a qubit overhead of $\Theta(Wd)$ where $W$ is the weight of the logical being measured and $d$ is the code distance. 
This auxiliary overhead can make fault-tolerant measurement-based computation significantly more expensive than implementing a quantum memory. 
For instance, a good quantum code that uses $n=\Theta(d)$ physical qubits to encode $k=\Theta(d)$ qubits (see Refs.~\cite{panteleev2022asymptotically,leverrier2022quantum} for examples) requires an auxiliary system at least $\Omega(n^2)$ in size, significantly larger than the code itself. 
Even for other constant rate $k=\Theta(n)$ codes with polynomial distances $d=\text{poly}(n)$, such as hypergraph product codes~\cite{tillich2014quantum,leverrier2015quantum}, measuring generic logical operators with weight $W=\Theta(n)$, such as in PBC, uses a super-linear number $n^{\omega(1)}$ of qubits in the auxiliary system.



In this work we introduce a general procedure for the fault-tolerant measurement of a logical operator in a stabilizer code by treating it as physical symmetry and gauging it via measurement~\cite{Williamson2020a,Tantivasadakarn2021,Tantivasadakarn2022}. 
The gauging procedure is common in the theory of condensed matter and high energy physics, it enforces a global symmetry via a product of local symmetries. 
Our measurement scheme is based on the fact that the gauging transformation makes it possible to infer the measurement of a logical operator via a product of local stabilizers in a deformed code. 
Gauging has been employed widely in theoretical physics to construct new models and establish relationships between known models~\cite{Kramers1941,Wegner1971,kogut1975hamiltonian}. 
The connection of gauging to quantum codes has also been studied~\cite{Williamson2016,Vijay2016,kubica2018ungauging,Dolev2021,Rakovszky2023}. 
Previous work focused on gauging procedures that correspond to the simultaneous initialization and readout of all logical qubits in a code and did not consider fault tolerance. 
The appearance of a similar procedure in recent work on weight reduction for quantum codes hints at wider applications of gauging in quantum error correction~\cite{Bacon2015,hastings2016weight,hastings2021quantum,wills2023tradeoff,sabo2024weight,WireCodesPreprint}. 

Here, we go beyond previous work by developing a fault-tolerant gauging measurement procedure that can precisely address arbitrary individual logical operators in a large code block. 
We prove that the worst-case qubit overhead of the gauging measurement procedure applied to an arbitrary Pauli operator of weight $W$ is $O(W \log^3 W)$, a significant improvement over existing results in the literature. 
We further demonstrate that the flexibility inherent to the gauging measurement procedure leads to better performance than existing schemes for logical measurement~\cite{Cowtan2024,cross2024linear} even for small instances of bivariate bicycle (BB) codes~\cite{bravyi2024high}, see the BB code example in the supplementary information~\cite{supplement}. \newline

\begin{figure*}[t]

\hspace{15pt}
\subfloat[Gauging measurement]{\raisebox{3pt}%
{\includegraphics[width=0.38\textwidth]{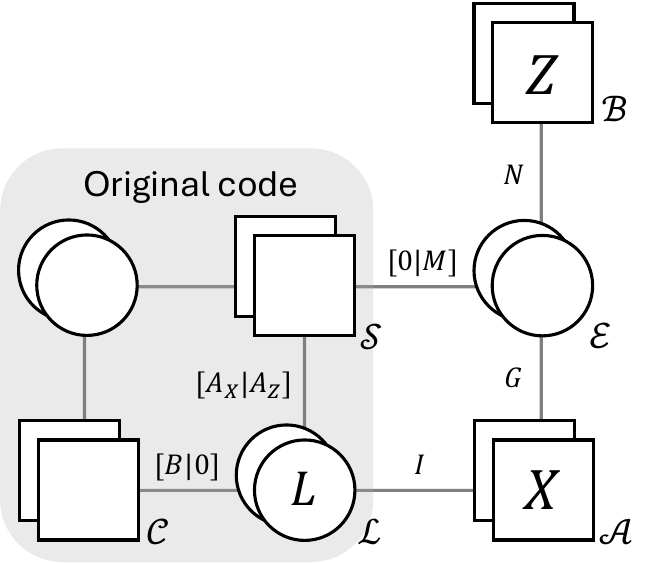}}
\label{fig:thin_Tanner}
}\hspace*{\fill}
\subfloat[Surface code lattice surgery]{\raisebox{28pt}%
{\includegraphics[width=0.48\textwidth]{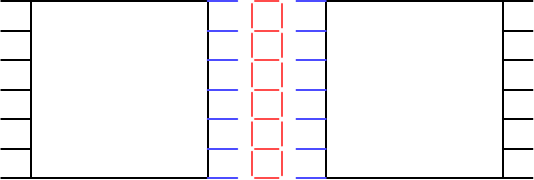}}
\label{fig:SurfaceCodes}
}\hspace{15pt}

\caption{\textbf{Tanner graph of the Gauging measurement and a Surface code example:}  \textbf{(a)} The Tanner graph of the deformed code can be represented compactly by creating ordered subsets of qubits (circles) and checks (squares) of $X$, $Z$, or mixed (unlabeled) type. Each edge connecting an $X$ or $Z$ check set $\mathcal{P}$ and qubit set $\mathcal{Q}$ is labeled by a binary matrix $H$ with $|\mathcal{P}|$ rows and $|\mathcal{Q}|$ columns, where $H_{ij}=1$ if and only if the $i^\text{th}$ check in $\mathcal{P}$ acts non-trivially on the $j^\text{th}$ qubit in $\mathcal{Q}$. Edges from mixed type check sets are instead labeled with a symplectic matrix of the form $[H_X|H_Z]$ where $H_X$ indicates qubits acted on by $X$ and $H_Z$ those acted on by $Z$. The original code may not be CSS, but we assume without loss of generality that $L$, the operator being measured, is $X$-type and its qubit support is $\mathcal{L}$. Set $\mathcal{C}$ contains checks from the original code that do not have $Z$-type support on $\mathcal{L}$, while set $\mathcal{S}$ contains checks that do. Also, $\mathcal{A}$ is the set of Gauss's law operators $A_v$, $\mathcal{B}$ is the set of flux operators $B_p$, $\mathcal{E}$ is the set of edge qubits, and we abuse notation so that $G$ also denotes the auxiliary graph's incidence matrix. Matrix $N$ specifies a cycle basis of $G$ and $M$ indicates how original stabilizers are deformed by perfect matching in $G$. \textbf{(b)} Applying the gauging measurement procedure to a product of $X$-type logicals on a pair of surface codes (blue edges) and choosing the graph $G$ to be a ladder (red edges) results in a standard surface code lattice surgery procedure.}
\end{figure*}

\noindent\textbf{\large Results}

\noindent\textbf{Gauging measurement} ---
We now present the procedure to measure a logical operator via gauging. 
We focus on the task of measuring a logical representative $L$ in an $[[n,k,d]]$ qLDPC stabilizer code on qubits specified by checks $\{s_i\}$~\cite{gottesman1997stabilizer}. 
Choosing an appropriate basis for each qubit, we ensure that $L$ is a product of Pauli-$X$ matrices without loss of generality. 
We view $L$ as a symmetry of the code by identifying the codespace with the groundspace of the Hamiltonan $H=-\sum_i s_i$ and noting that $L$ commutes with $H$. 
From this point of view, it is natural to project into an eigenspace of $L$ by gauging it. 

To gauge $L$ we first pick a connected graph $G$ whose vertices $V$ are identified with the qubits in the support of $L$.
We sometimes refer to this graph as the auxiliary graph.
Next, we introduce auxiliary qubits on the edges $E$ of $G$, one per edge, initialized in the $\ket{0}$ state. 
In the physics literature these edge qubits are referred to as \textit{gauge qubits}. 
Employing these edge qubits, we introduce a set of \textit{Gauss's law operators}, $\mathcal{A}=\{A_v\}_{v\in V}$ where ${A_v=X_v\prod_{e\ni v}X_e}$, and crucially $L=\prod_vA_v$, so that we can infer the eigenvalue of $L$ from the eigenvalues of all the $A_v$. To obtain these eigenvalues, we measure each of the $A_v$ operators. 
This results in a deformed code, which is a subsystem code if the number of edge qubits exceeds the number of Gauss's law operators. 
However, this subsystem code is gauge fixed by the choice of initial state for the edge qubits. 
Specifically, include in the stabilizer group a set of \textit{flux operators $\mathcal{B}=\{B_p\}_{p\in C}$, where $B_p=\prod_{e\in p}Z_e$, and $p$ labels a generating set of cycles for $G$.
The flux operators commute with the $A_v$ operators by construction and can be obtained by deforming the initial single-qubit $Z_e$ operators on edge qubits following standard stabilizer update rules~\cite{gottesman1997stabilizer}.}   

The measurement of operators in $\mathcal{A}$ also necessarily deforms the check operators of the original code. Specifically, those check operators $s_i$ that have $Z$-type support (i.e.~the set of qubits acted on by a $Y$ or $Z$) intersecting with the support of $L$ do not commute with operators in $\mathcal{A}$.
Thus, in the deformed code $s_i$ is replaced by $s_i$ times some set of single-qubit $Z_e$ operators on edge qubits. 
This set of edge qubits has a conceptually simple form in relation to the graph $G$. 
Because $L$ is logical in the original code, $s_i$ commutes with it, and so the size of the $Z$-type support of $s_i$ on $L$ is even.
The corresponding even-sized set of vertices in $G$ can be paired up via a perfect matching $\mu_i$, i.e.~a set of edges whose $\mathbb{Z}_2$ boundary is equal to the set of vertices. Then $\tilde s_i = s_i\prod_{e\in\mu_i}Z_e$ is an appropriately deformed version of $s_i$ which commutes with $\mathcal{A}$. Because the flux operators $\mathcal{B}$ defined on cycles of the graph have $+1$ eigenvalues, different choices of perfect matching result in the same code space.

See Fig.~\ref{fig:thin_Tanner} for a depiction of the Tanner graph of the deformed code including the auxiliary edge qubits, the Gauss's law operators $\mathcal{A}$, the flux operators $\mathcal{B}$, and the deformed checks $\{\tilde s_i\}$.

The above gauging procedure maps the original code space into the code space of a \textit{deformed code} (also known as a gauged code in physics literature~\cite{Dolev2021}). 
The gauged code supports fully mobile charges that are created and moved around via strings of $Z_e$ operators, and fluxes that are created via $X_e$ operators. 
These operators can be viewed as generalized higher-form symmetries, and share a mutual braiding anomaly due to the commutation relations of the Pauli matrices. 
To implement the eigenspace projections of the gauging procedure we rely on measuring the Gauss's law operators~\cite{Williamson2020a,Tantivasadakarn2021}. 
While the initial measurement of each Gauss's law operator may return a random eigenvalue outcome, the product of all outcomes yields the measured eigenvalue of the logical $L$. 
We can return to the original code space by simply projecting onto the simultaneous $+1$ eigenspace of the $Z_e$ operators on all the edge qubits. 
To implement this step we measure $Z_e$ on each edge qubit. 
While the outcome of each measurement may be random, it is possible to find a Pauli-$X$ \textit{byproduct operator} on the original qubits that can be applied to achieve the same effect as projecting onto the all $+1$ measurement outcome.
This step is commonly referred to as ungauging~\cite{Williamson2016}. 
These steps are collected in Algorithm~\ref{alg:GaugeLogical}.

\begin{figure}[t]
\begin{algorithm}[H]
\caption{Gauging measurement procedure}
\label{alg:GaugeLogical}
\begin{algorithmic}
    \Require A Pauli stabilizer code specified by checks $\{s_i\}$ initialized in the physical code state $\ket{\psi}$. \\
    A Pauli logical operator representative $L$. \\
    A connected graph $G=(V,E)$ with an arbitrarily chosen vertex $v_0$ and an isomorphism that identifies the qubits in the support of $L$ with the vertices of $G$. 
    
    \Ensure The result $\sigma=\pm1$ of measuring $L$ and the post-measurement code state $\ket{\Psi}=\frac{1}{2}(\mathds{1}+\sigma L)\ket{\psi}$. 

    \State $\sigma \gets 1$
    \State $\ket{\Psi} \gets \ket{\psi}$
    \State $\{\omega_e\}_{e\in E} \gets \{1\}_{e\in E}$
    \State $\gamma \gets \{\}$
    
    \ForEach{edge $e$ in $G$} 
    \State $\ket{\Psi} \gets \ket{\Psi}\otimes \ket{0}_e$ 
    \Comment{Initialize a qubit on $e$}
    \EndFor
    \ForEach{vertex $v$ in $G$}
	\State Measure $A_v$ on $\ket{\Psi}$
        \Comment{$A_v=X_v\prod_{e\ni v}X_e$}
        \State $\varepsilon \gets $ Measurement result 
        \Comment{Measurement result is $\pm 1$}
        \State $\ket{\Psi} \gets \frac{1}{2}(\mathds{1}+\varepsilon A_v)\ket{\Psi}$
        \Comment{Post-measurement state}
        \State $\sigma \leftarrow \varepsilon \sigma$
    \EndFor
    \ForEach{edge $e$ in $G$}
	\State Measure $Z_e$ on $\ket{\Psi}$
        \State $\omega _e \gets $ Measurement result
        \Comment{Measurement result is $\pm 1$}
        \State $\ket{\Psi} \gets \frac{1}{2}(\mathds{1}+\omega_e Z_e)\ket{\Psi}$
        \Comment{Post-measurement state}
        \State Discard auxiliary qubit on $e$
    \EndFor
    \ForEach{vertex $v$ in $G$}
    \State $\gamma \gets$ An arbitrary edge-path from $v_0$ to $v$
        \If{$\prod_{e\in \gamma}\omega_e=-1$}
            \State $\ket{\Psi} \gets X_v \ket{\Psi}$
            \Comment{Apply byproduct operator}
        \EndIf
    \EndFor
\end{algorithmic}
\end{algorithm}
\end{figure}

\begin{theorem}[Gauging measurement]
\label{thm:GaugingMeasurement}
    The gauging procedure defined in Algorithm~\ref{alg:GaugeLogical} is equivalent to performing a projective measurement of $L$.
\end{theorem}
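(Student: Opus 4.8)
The plan is to propagate the (unnormalized) joint state of the data qubits and the edge qubits through the four loops of Algorithm~\ref{alg:GaugeLogical} and verify that the result equals $\tfrac12(\mathds{1}+\sigma L)\ket{\psi}$ up to normalization, while confirming that $\sigma$ carries the correct outcome statistics. Two elementary facts drive everything. \textbf{(i)} The Gauss's law operators $A_v$ pairwise commute (each acts by Pauli-$X$ on every qubit it touches), and for any $S\subseteq V$ one has $\prod_{v\in S}A_v=\big(\prod_{v\in S}X_v\big)\big(\prod_{e\in\delta S}X_e\big)$, where $\delta S\subseteq E_G$ is the edge cut of $S$ in $G$; taking $S=V$ gives $\prod_v A_v=L$ since each edge is hit exactly twice. \textbf{(ii)} Since $G$ is connected, a subset $t\subseteq E_G$ equals $\delta S$ for some $S$ iff $t$ lies in the cut space of $G$, and then exactly two such sets exist, $S$ and $V\setminus S$; moreover for any edge-path $\gamma$ from $v_0$ to a vertex $v$ one has $\prod_{e\in\gamma}\omega_e=-1$ iff $v_0$ and $v$ lie on opposite sides of $\delta S$, where $t=\{e:\omega_e=-1\}$.

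With these in hand the computation is mechanical. After adjoining the edge qubits in $\ket{0}^{\otimes E_G}$ and measuring all $A_v$ with outcomes $\varepsilon_v$, expanding the projector $\prod_v\tfrac12(\mathds{1}+\varepsilon_v A_v)$ as a sum over $S\subseteq V$ and using $\big(\prod_{e\in\delta S}X_e\big)\ket{0}^{\otimes E_G}=\ket{\delta S}$ writes the state, up to an overall constant, as $\sum_{S\subseteq V}\big(\prod_{v\in S}\varepsilon_v\big)\big(\prod_{v\in S}X_v\ket{\psi}\big)\otimes\ket{\delta S}$. Measuring all $Z_e$ projects the edge register onto a single basis state $\ket{t}$; the amplitude is nonzero only when $t=\delta S$ is a cut (equivalently, the flux operators $\prod_{e\in p}Z_e$ are preserved with eigenvalue $+1$ by the $A_v$ measurements), and then only the terms $S$ and $V\setminus S$ survive. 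Pairing them via $\prod_{v\in S}X_v=L\prod_{v\in V\setminus S}X_v$ and $\prod_{v\in S}\varepsilon_v=\sigma\prod_{v\in V\setminus S}\varepsilon_v$ with $\sigma=\prod_v\varepsilon_v$ collapses the data register to a nonzero multiple of $\big(\prod_{v\in V\setminus S}X_v\big)\tfrac12(\mathds{1}+\sigma L)\ket{\psi}$. Finally the byproduct loop applies $X_v$ exactly on the vertices separated from $v_0$ by $\delta S$: depending on which side contains $v_0$ this multiplies by $\prod_{v\in S}X_v$ or $\prod_{v\in V\setminus S}X_v$, and since $\prod_v X_v=L$ and $L\,\tfrac12(\mathds{1}+\sigma L)=\sigma\,\tfrac12(\mathds{1}+\sigma L)$, in both cases the data register ends (up to a global sign) at $\tfrac12(\mathds{1}+\sigma L)\ket{\psi}$; discarding the now-unentangled edge qubits leaves exactly this state.

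It remains to check the statistics of $\sigma$: because the $A_v$ commute, the product $\prod_v\varepsilon_v$ of their outcomes is distributed as the outcome of measuring $\prod_v A_v=L$, which on $\ket{\psi}\otimes\ket{0}^{\otimes E_G}$ coincides with measuring $L$ on $\ket{\psi}$ since $L$ acts trivially on the edge register, and neither the $Z_e$ measurements nor the byproduct correction alter $\sigma$. Hence the procedure returns $\sigma=\pm1$ with probability $\|\tfrac12(\mathds{1}\pm L)\ket{\psi}\|^2$ and leaves the state $\propto\tfrac12(\mathds{1}+\sigma L)\ket{\psi}$, which is precisely a projective measurement of $L$. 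I expect the only real obstacle to be careful sign bookkeeping through the chain of projectors together with the two graph-combinatorial parity claims in \textbf{(ii)} (cuts have exactly two preimages in a connected graph, and path parity detects the cut); notably the argument uses only that $L$ is a product of $X$'s commuting with the $s_i$, and not the CSS or deformed-code structure of Fig.~\ref{fig:thin_Tanner}.
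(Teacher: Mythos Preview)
Your proposal is correct and follows essentially the same approach as the paper's proof: expand $\prod_v\tfrac12(\mathds{1}+\varepsilon_v A_v)$ as a sum over subsets $S\subseteq V$ (the paper phrases this as a sum over $\mathbb{Z}_2$-valued $0$-cochains $c$), collapse the edge register via the $Z_e$ measurements so that only terms with $\delta S=t$ survive, and then use connectivity of $G$ (equivalently $Z^0(G,\mathbb{Z}_2)=\{0,1\}$) to pair the two surviving terms into $\tfrac12(\mathds{1}+\sigma L)$ times an $X$-type byproduct. Your write-up is in fact slightly more complete than the paper's, since you explicitly verify that the path-parity rule in the final loop of Algorithm~\ref{alg:GaugeLogical} cancels the byproduct (up to a global sign) regardless of which side of the cut contains $v_0$, and you separately check the Born-rule statistics of $\sigma$; the paper simply names $X_V(c')$ as ``a Pauli byproduct operator determined by the observed measurement outcomes'' without spelling out either point.
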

\begin{proof}
The proof is contained in the Methods section. 
\end{proof}

We note that the auxiliary graph $G$ can be chosen to have additional vertices beyond the qubits in the support of a given logical operator. 
Conceptually, this is achieved by adding an extra qubit, initialized in the $\ket{+}$ state, for each desired extra vertex and then gauging $L$ multiplied by $X$ on each extra vertex, instead of $L$ itself.
We refer to extra vertices as \textit{dummy vertices}.
In practice, the qubits on dummy vertices do not need to exist -- as they are initialized in the $\ket{+}$ state, their contribution to the Gauss's law operators is just the deterministic value $+1$.

\noindent\textbf{Graph desiderata and constructions} ---Algorithm~\ref{alg:GaugeLogical} is an extremely flexible recipe for logical measurement because the choice of a connected auxiliary graph $G$ is arbitrary. 
However, the properties of the deformed code depend strongly on the choice of this graph. 
For instance, the gauging measurement procedure has qubit overhead equal to the number of edges in the graph $G$ (and involves an additional, proportional number of checks).
Next, we list desirable properties of $G$ to achieve a fault-tolerant implementation of the gauging measurement with low overhead.

\begin{definition}[Suitable Graph]\label{defn:GDesiderata}
Given a logical operator $L$ in a qLDPC code, a suitable graph $G$ for the gauging measurement procedure is defined to be a connected graph that results in a deformed code that is LDPC and has code distance at least that of the original code. 
\end{definition}

\begin{theorem}[Graph Desiderata]\label{thm:GDesiderata}
Consider the gauging measurement procedure for a logical operator $L$ in a qLDPC code with a choice of connected graph $G$. 
If the following desiderata are satisfied, $G$ is a suitable graph.
\begin{enumerate}
\item The graph $G$ is sparse, i.e.~has constant degree. 
\item The matchings $\mu_i$ for all checks $s_i$ are of constant size, and any edge qubit is in no more than a constant number of these matchings.
\item There is a generating set of constant-weight cycles for $G$ such that any edge is in a constant number of cycles from the generating set.
\item The graph $G$ is sufficiently expanding in the sense that its Cheeger constant $h(G)$ is at least 1.
\end{enumerate}
\end{theorem}

\begin{proof}
The first three desiderata are shown to hold if and only if the deformed code is LDPC by explicitly defining a generating set of checks for the deformed code, Lemma~\ref{lem:deformed} in Methods, and inspecting that generating set for the LDPC property. Lemma~\ref{lem:spacedistance} in Methods establishes that the expansion property in desideratum 4  implies a sufficient deformed code distance.
\end{proof}

\begin{theorem}[Construction of a suitable $G$]\label{thm:WorstCaseG}
    Consider an arbitrary weight-$W$ logical $L$ in a qLDPC code, a suitable graph $G$ can be constructed with qubit overhead $O(W \log^3 W)$. 
\end{theorem}

\begin{proof}
    We now outline a construction that produces a graph $G$ with qubit overhead $O(W \log^3 W)$ that satisfies the desiderata in Theorem~\ref{thm:GDesiderata}. 
    The graph starts with $W$ vertices associated to the qubits in the support of $L$.
    First, to satisfy desideratum 2, for each check $s_i$ we pair up the qubits in the intersection of the $Z$-type support of $s_i$ with the support of $L$, and add an edge to $G$ for each pair of matched qubits. 
    This step results in a constant degree graph due to the LDPC property of the input code. 
    Second, to satisfy desideratum 4, we add edges to $G$ until $h(G)\geq 1$. 
    This step can be performed by adding edges to $G$ at random while preserving constant degree, or by taking an existing constant degree expander graph with $h(G)\geq 1$ and adding its edges to $G$. 
    Both of the above methods lead to a constant degree graph. 
    Third, we take the cartesian graph product \cite{sabidussi1959graph,vizing1963cartesian} of $G$ with a path graph on $r$ vertices. In other words, this step takes $r$ copies of $G$ and, for all $i=1,2,\dots,r-1$, connects copy $i$ to copy $i+1$ by adding an edge between each vertex in copy $i$ and the corresponding vertex in copy $i+1$.
    Fourth, add edges to the additional layers to sparsify the cycle basis of this product graph to achieve a desired constant weight cycle basis. 
    This results in a final graph that satisfies the first three desiderata by construction, see Methods Lemma~\ref{lem:SparsifiedDesiderata}. 
    The last two steps are based on the Freedman-Hastings decongestion lemma \cite{freedman2021building} and cellulation, respectively, and their purpose is to satisfy desideratum 3 by creating a graph with a sparse cycle basis without compromising the other desiderata that were established in steps one and two. 
    Desideratum 4 may not be satisfied by the larger graph created during the decongestion step, however we show in the Methods, Corollary~\ref{cor:SparsifiedDistance}, that so long as the original graph $G$ satisfies desideratum 4 the deformed code distance is at least that of the original code. 
    The Freedman-Hastings decongestion lemma establishes a $\log^3 W$ upper bound on the number of layers needed to sparsify the cycles of an arbitrary constant degree graph $G$ with $W$ vertices when following the third and fourth steps; see Ref.~\cite{he2025extractors} for a detailed review of this point.
    The lemma comes with an efficient algorithm to construct a sparsified basis of cycles, given a constant degree graph. See Fig.~\ref{fig:thick_Tanner} for a depiction of the deformed code after these steps. 
\end{proof}

\begin{figure*}[t]
\includegraphics[width=0.9\textwidth]{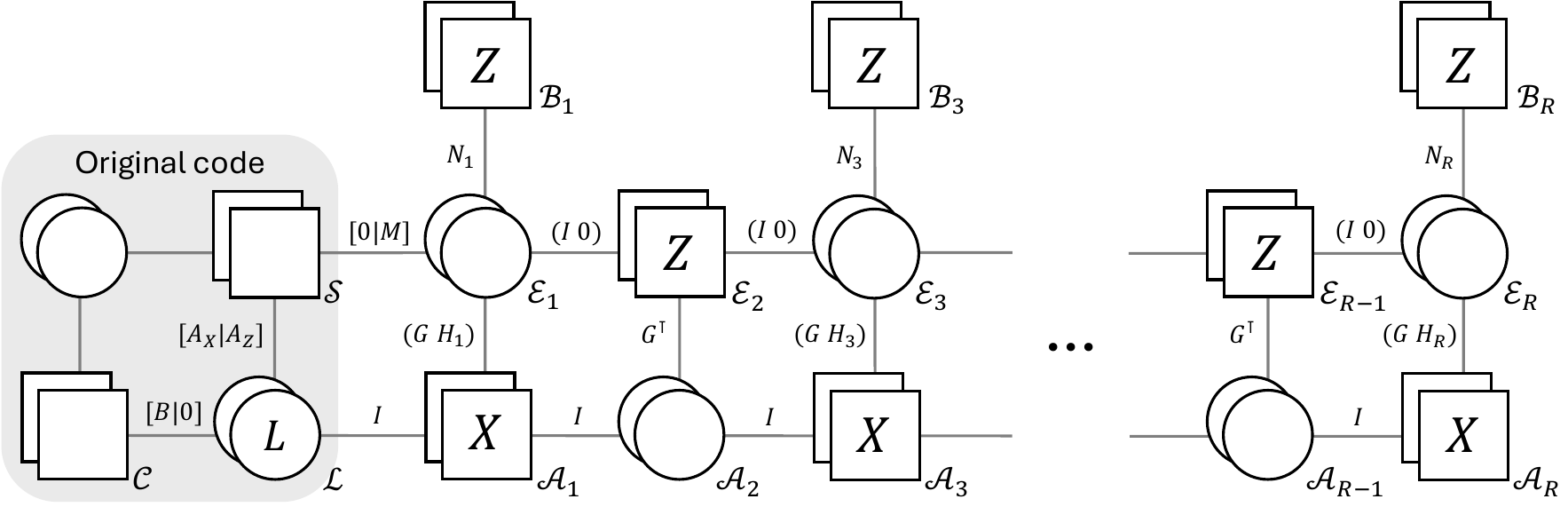}
\caption{\label{fig:thick_Tanner} \textbf{Tanner graph of the decongested and cellulated gauging measurement:} The Tanner graph of the complete construction including decongestion and cellulation to guarantee the deformed code is LDPC, see Theorem~\ref{thm:WorstCaseG}. We use the same notation as explained in Fig.~\ref{fig:thin_Tanner}. Here, as a result of the cartesian product operation in step three of Theorem~\ref{thm:WorstCaseG}, the $r=(R+1)/2$ copies of a base expanding graph $G$ have vertices and edges $(\mathcal{A}_j,\mathcal{E}_j)$ for odd $j$. Notation $(B_1\text{\space}B_2)$ indicates a block matrix constructed from submatrices $B_1$ and $B_2$. We use block matrices where qubits of the left block are those on edges of the expanding graph $G$ and qubits of the right block are those added to cellulate cycles in step four of Theorem~\ref{thm:WorstCaseG}. Depending on the layer, cellulation is done to different elements of the cycle basis of $G$. Using $r=O(\log^3W)$ layers ensures the deformed code is LDPC by the decongestion lemma \cite{freedman2021building}.}
\end{figure*}

\noindent\textbf{Fault-tolerant implementation} ---
Even if measurements are faulty, algorithm~\ref{alg:GaugeLogical} can be implemented fault tolerantly by measuring the stabilizer checks of the original code for $d$ rounds, followed by measuring the checks of the deformed code for $d$ rounds, and finally measuring the checks of the original code for a further $d$ rounds. 

\begin{theorem}[Fault tolerance]
\label{thm:fault-tolerance}
    The fault-tolerant implementation of Algorithm 1 with a suitable graph has spacetime fault-distance $d$.
\end{theorem}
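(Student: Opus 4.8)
The plan is to work with the spacetime (detector-error-model) code induced by the schedule — $d$ rounds of the original checks $\{s_i\}$, then $d$ rounds of the deformed-code checks (the deformed $\tilde s_i$ together with the Gauss's law operators $A_v$ and the flux operators $B_p$), then $d$ rounds of the original checks — equipped with the usual detectors comparing consecutive rounds of each check, together with the interface detectors described below, and with protected logical observables given by the $2(k-1)$ surviving logical Paulis and the inferred outcome $\sigma=\prod_v\varepsilon_v$ of $L$. We want to show that any undetectable fault configuration $F$ that flips one of these observables satisfies $|F|\ge d$; the matching upper bound is immediate, since applying a minimum-weight logical representative in a single round, or inserting a full-length world-line of $A_v$-measurement errors through the middle slab, already realises weight $d$.

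First I would establish a static fact: the deformed (gauged) code has distance at least $d$. This is where the Desiderata of Remark~\ref{rem:GDesiderata} and the construction of Remark~\ref{rem:WorstCaseG} enter. A $Z$-type logical of the deformed code must commute with every $A_v=X_v\prod_{e\ni v}X_e$, so its edge support is a cycle of $G$; since the flux operators $B_p$ run over a basis of the cycle space of $G$, multiplying by fluxes removes all edge support and leaves a valid $Z$-operator on the original qubits, which is a nontrivial logical of the original code and hence has weight $\ge d$. An $X$-type logical must commute with every flux, so its edge support is a cut $\delta S$; since $\prod_{v\in S}A_v$ has $X$-support on the original qubits $S$ and on exactly the edges of $\delta S$, multiplying by it removes all edge support, and the Cheeger bound $h(G)\ge 1$, i.e.\ $|\delta S|\ge|S|$, ensures this does not increase the weight, so again we land on an original-code logical of weight $\ge d$. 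The constant-degree, low-weight-cycle-basis, and bounded-path-between-check-partners properties of $G$ then guarantee the deformed code is LDPC, so that this distance is not degraded at the circuit level.

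Next I would run a spacetime cleaning argument over the three slabs. The crucial structural point is that the interfaces are transparent: $X$-type original checks are undeformed, while $Z$- and mixed-type original checks are deformed only by $Z_e$ operators on edge qubits that are initialised in $\ket{0}$ (and, at the end, read out in the $Z$ basis), on which $\prod Z_e$ acts as $+1$; thus every original check continues across the slab-1/slab-2 and slab-2/slab-3 interfaces via an interface detector, and every flux $B_p$ is seeded consistently from the $\ket{0}$ edge qubits. Given an undetectable $F$ with $|F|<d$ one can therefore, slab by slab, deform $F$ by detectors and use the $d$-fold repetition of measurements to reduce it to a purely spatial error on a single time slice of each slab together with purely temporal world-lines; the spatial pieces must be undetectable spatial errors of the relevant code, hence trivial or of weight $\ge d$ by the original-code distance and by the static fact above, while any temporal world-line that touches a detector or changes $\sigma$ must traverse a full slab and so has weight $\ge d$. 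Flipping any protected observable requires at least one such nontrivial piece, giving $|F|\ge d$, a contradiction; the faithful transport of the logical state and of $\sigma$ across the interfaces in the absence of faults is exactly Theorem~\ref{thm:GaugingMeasurement}.

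I expect the main obstacle to be this last step at the level of rigor — verifying that the combined system of intra-slab and interface detectors admits no short undetectable configuration that smuggles a logical effect through a code change, which hinges on the transparency of the deformation, the relation $\prod_v A_v=L$, and the flux operators forming a genuine cycle basis — together with the bookkeeping needed to lift the phenomenological cleaning argument to the full circuit level, which relies on each deformed-code check being measurable by a standard fault-tolerant gadget, itself guaranteed by the LDPC property from Remark~\ref{rem:WorstCaseG}. The full details would be carried out in the supplementary material.
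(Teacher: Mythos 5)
Your overall architecture matches the paper's: a spacetime detector code, a decoupling of any undetectable fault into a spacelike piece at a single slice plus timelike world-lines, a time distance of $t_o-t_i=d$ witnessed by a full world-line of $A_v$-measurement faults, and a space distance obtained by cleaning deformed-code logicals onto the original code using the $A_v$'s, the cycle basis, and the Cheeger bound. Your $X$-type cleaning (cocycle $\Rightarrow$ cut $\delta S$, then $\prod_{v\in S}A_v$ trades $|\delta S|$ edge support for at most $|S|$ vertex support, with $h(G)\ge 1$ preventing weight loss) is exactly the paper's Lemma on space distance, modulo the standard step of replacing $S$ by its complement so that $|S|\le|V_G|/2$.

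There are, however, two genuine problems. First, your $Z$-type argument is wrong as stated: commutation with $A_v=X_v\prod_{e\ni v}X_e$ forces $\partial(\text{edge }Z\text{-support})$ to equal the $Z$-support of the logical on the vertices of $G$, so the edge support is a \emph{relative} cycle (a union of paths ending on vertex-$Z$ sites), not an element of the cycle space, and it cannot in general be removed by multiplying with fluxes $B_p$. The fix is that it does not need to be removed: once the $X$ edge support is cleaned, the residual edge support is pure $Z$, which commutes with the $Z_e$ factors by which $\tilde s_j$ differs from $s_j$; hence the restriction to the original qubits is already a nontrivial logical of the original code of weight $\ge d$, and the leftover $Z_e$'s only add weight. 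This is how the paper argues. Second, the theorem is claimed for "a suitable graph," which for a generic $L$ means the multi-layer cycle-sparsified graph $\bar{\bar{G}}$ of Remark~\ref{rem:WorstCaseG}; your static distance argument only treats a single layer, and the paper's proof needs a separate case analysis (partitioning vertices by whether all of their layer copies are cleaned) to show the Cheeger bound on $G_0$ still controls the coboundary in $\bar{\bar{G}}$. Finally, you should make explicit why the cleaning cannot make $F$ look heavier than it is: the paper notes that each spacetime stabilizer used preserves the parity of space/initialization faults along the world-line at each fixed position, so $|F|$ is lower bounded by the weight of the cleaned spacelike representative. Without some such invariant the step "the cleaned spatial piece has weight $\ge d$, hence $|F|\ge d$" does not follow.
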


\begin{proof}
    The proof follows several lemmas that are stated and proved in the Fault-tolerant implementation section of the methods and the Spacetime code and spacetime fault-distance section of the supplementary information~\cite{supplement}. 
    The argument is structured as follows, the fault-distance is shown to be lower bounded by the minimum of the space fault-distance and the time fault-distance. 
    The time fault-distance is lower bounded by the number of rounds between the start and end of the code deformation, which is chosen to be $d$. 
    The space fault-distance is lower bounded by the distance of the original code multiplied by $\min(1,h(G))$.
    This is because any logical in the deformed code can be cleaned such that it defines a logical of the original code. 
    The distance reduction of the cleaning is determined by the connectivity of $G$.
\end{proof}

The $d$ rounds of quantum error correction in the original code before and after the gauging measurement are for the purposes of establishing a proof and may be overkill in practice. 
In a full fault-tolerant computation the number of rounds required before and after a gauging measurement depends on the surrounding operations. 
If the gauging measurement occurs in the middle of a large computation, it could be that even a constant number of rounds before and after are sufficient to ensure fault tolerance. 
It is an open question to fully characterize when fault tolerance is achieved with fewer than $d$ rounds between gauging measurements, and how this choice affects the threshold depending on the surrounding operations.

\noindent\textbf{\large Discussion}

In this work we have introduced a method to implement low-overhead fault-tolerant quantum computation with high-distance qLDPC codes. 
Our method is based on viewing a logical operator as a symmetry and gauging it via local measurements. 
The gauging measurement procedure allows a high degree of flexibility, making it a promising method that can be optimized for even small code instances with near-term applications. 
We have taken advantage of this flexibility to devise specific gauging measurements for examples of bivariate bicycle codes that are more efficient than all existing measurement schemes in the literature~\cite{cross2024linear,cohen2022low}.

Gauging measurement goes beyond previous approaches to qLDPC code surgery, which was initiated in Ref.~\cite{cohen2022low} and developed in subsequent works \cite{Cowtan2024,cross2024linear,Zhang2024,Xu2024Fast}. 
Our key innovation is the introduction of a flexible ancilla system that is not wholly defined by the structure of the logical in the original code. 
This allowed us to find a fault-tolerant qLDPC code surgery procedure to measure arbitrary Pauli logicals with a worst-case scaling that is linear in the weight of the target logical, up to polylogarithmic factors. 
It also allowed us to find more efficient logical measurements on instances with small block size. 
The flexibility of the gauging measurement procedure can be used to recover a number of well-known existing schemes for logical measurement including surface code lattice surgery~\cite{horsman2012surface}, the measurement scheme in Ref.~\cite{cohen2022low} and the modified version in Ref.~\cite{cross2024linear}. See the section on recovering existing protocols from the gauging measurement framework in the supplementary information~\cite{supplement} for more discussion about how these and other existing protocols are special cases of gauging measurement.

Some aspects of our approach to designing high-performance gauging measurements extend previous work on qLDPC code surgery. 
Ref.~\cite{cross2024linear} in particular points out the importance of sufficient hypergraph expansion to ensure a large fault-distance. 
As a result of our work, we are able to guarantee appropriate expansion in the gauging measurement of an arbitrary logical operator by adding edges to the auxiliary graph. 
The gauging measurement also implements gauge-fixing of the deformed code, as in Ref.~\cite{cross2024linear}, by measuring the checks associated to cycles in the auxiliary graph. 
Importantly, the graph based approach inroduced in the gauging measurement allows us to construct a sparse cycle check basis for any code by employing cellulation and decongestion \cite{hastings2021quantum,freedman2021building} to ensure the deformed code remains LDPC. This goes beyond previous ad hoc approaches to gauge fixing on specific instances without a worst-case guarantee as in Ref.~\cite{cross2024linear}. 

Concurrent with our work, Ref.~\cite{XanaduDraft} introduced a similar approach to qLDPC code surgery that involves gauge fixing with cellulation, and boosting expansion by adding edges to an auxiliary graph. There, the focus was on logicals of a fixed Pauli type in CSS codes, while our focus is on general Pauli logicals in stabilizer codes which can be non-CSS. Furthermore,  Ref.~\cite{XanaduDraft} does not consider decongestion, and as such does not provide a worst-case guarantee on the scaling of their logical measurement procedure. 
In a follow up work, Ref.~\cite{Swaroop2024} builds directly on the gauging measurement framework by introducing a graph construction algorithm that is used to find an efficient procedure to perform joint logical measurement of disjoint logical operators on one or more qLDPC code blocks. This operation is a gauging measurement with a different choice of auxiliary graph to the worst-case construction outlined above. 
In another follow up work, Ref.~\cite{yuan2026parsimoniousquantumlowdensityparitycheck} introduced an alternative to the Freedman-Hastings decongestion lemma to construct a generic suitable graph with worst-case scaling $O(W \log W)$. 

The progress reported in this work raises a number of directions that invite further investigation. 
For what families of good qLDPC codes can the gauging measurement be performed with linear qubit overhead? 
Can the generalization of the gauging measurement to a hypergraph be used to measure a large number of commuting but overlapping logical operators simultaneously?
Can we introduce meta-checks to the generalized gauging measurement procedure to make the measurement single-shot with a constant time overhead?
How should the fault-tolerant gauging measurement be decoded? 
We expect that a general purpose decoder based on belief propagation with ordered statistics post-processing can be used, however it should be possible to take advantage of the extra structure inherent to the procedure to find a decoder with better performance. 
Such a decoder could incorporate matching on the $A_v$ syndromes similar to the approach in Ref.~\cite{cross2024linear}. 

The directions listed here can be interpreted as a list of desiderata for an ideal fault-tolerant logical measurement procedure: it should implement PBC, be fully parallelized, and have constant relative overhead in space and time. 
It remains an open question whether any fault-tolerant logical measurement can satisfy the above desiderata, or how close one can get. 
Even the answer to the classical analog of this question is not known.

\section*{acknowledgements}

We thank Tomas Jochym-O'Connor and Guanyu Zhu for useful discussions, Alexander Cowtan for useful comments on a draft, Esha Swaroop for bringing our attention to the Hastings-Freedman decongestion lemma, and Sunny He for explaining how the $\log^3$ factor results implicitly from that lemma's proof. 
TY thanks Andrew Cross for sharing his method of using CPLEX to calculate code distances. 
DW thanks Lawrence Cohen for an inspiring discussion on the ferry back from Rottnest Island. 
This work was initiated while DW was visiting the Simons Institute for the Theory of Computing.

\section*{Author contributions}
DJW and TJY both contributed extensively to the paper.

\section*{Competing interests}
US Patent Application 18/897427 (filed on 26 September 2024 and naming DJW and TJY as co-inventors) contains technical aspects from this paper.

\bibliographystyle{apsrev4-1} 
\bibliography{references}

@article{freedman2021building,
  title={Building manifolds from quantum codes},
  author={Freedman, Michael and Hastings, Matthew},
  journal={Geometric and Functional Analysis},
  volume={31},
  number={4},
  pages={855--894},
  year={2021},
  publisher={Springer},
  url={https://doi.org/10.1007/s00039-021-00567-3}
}

@article{cross2024linear,
  title={Improved {QLDPC} Surgery: Logical Measurements and Bridging Codes},
  author={Cross, Andrew and He, Zhiyang and Rall, Patrick and Yoder, Theodore},
  journal={arXiv preprint arXiv:2407.18393},
  year={2024},
  url={https://doi.org/10.48550/arXiv.2407.18393}
}

@article{bravyi2024high,
  title={High-threshold and low-overhead fault-tolerant quantum memory},
  author={Bravyi, Sergey and Cross, Andrew W and Gambetta, Jay M and Maslov, Dmitri and Rall, Patrick and Yoder, Theodore J},
  journal={Nature},
  volume={627},
  number={8005},
  pages={778--782},
  year={2024},
  publisher={Nature Publishing Group UK London},
  url={https://doi.org/10.1038/s41586-024-07107-7}
}

@article{cohen2022low,
  title={Low-overhead fault-tolerant quantum computing using long-range connectivity},
  author={Cohen, Lawrence Z and Kim, Isaac H and Bartlett, Stephen D and Brown, Benjamin J},
  journal={Science Advances},
  volume={8},
  number={20},
  pages={eabn1717},
  year={2022},
  publisher={American Association for the Advancement of Science},
  url={https://doi.org/10.1126/sciadv.abn1717}
}

@manual{cplex2022v22,
  title        = {ILOG CPLEX Optimization Studio},
  year         = {2022},
  note         = {Available at \url{https://www.ibm.com/docs/en/icos/22.1.1?topic=optimizers-users-manual-cplex}},
  organization = {International Business Machines},
  edition      = {22.1.1}
}

@inproceedings{panteleev2022asymptotically,
  title={Asymptotically good quantum and locally testable classical {LDPC} codes},
  author={Panteleev, Pavel and Kalachev, Gleb},
  booktitle={Proceedings of the 54th Annual ACM SIGACT Symposium on Theory of Computing},
  pages={375--388},
  year={2022},
  url={https://doi.org/10.1145/3519935.352001}
}

@inproceedings{leverrier2022quantum,
  title={Quantum tanner codes},
  author={Leverrier, Anthony and Z{\'e}mor, Gilles},
  booktitle={2022 IEEE 63rd Annual Symposium on Foundations of Computer Science (FOCS)},
  pages={872--883},
  year={2022},
  organization={IEEE},
  url={https://doi.org/10.1109/FOCS54457.2022.00117}
}

@article{moussa2016transversal,
  title={Transversal Clifford gates on folded surface codes},
  author={Moussa, Jonathan E},
  journal={Physical Review A},
  volume={94},
  number={4},
  pages={042316},
  year={2016},
  publisher={APS},
  url={https://doi.org/10.1103/PhysRevA.94.042316}
}

@article{breuckmann2024fold,
  title={Fold-transversal Clifford gates for quantum codes},
  author={Breuckmann, Nikolas P and Burton, Simon},
  journal={Quantum},
  volume={8},
  pages={1372},
  year={2024},
  publisher={Verein zur F{\"o}rderung des Open Access Publizierens in den Quantenwissenschaften},
  url={https://doi.org/10.22331/q-2024-06-13-1372}
}

@article{litinski2019game,
  title={A game of surface codes: Large-scale quantum computing with lattice surgery},
  author={Litinski, Daniel},
  journal={Quantum},
  volume={3},
  pages={128},
  year={2019},
  publisher={Verein zur F{\"o}rderung des Open Access Publizierens in den Quantenwissenschaften},
  url={https://doi.org/10.22331/q-2019-03-05-128}
}

@article{bravyi2016trading,
  title={Trading classical and quantum computational resources},
  author={Bravyi, Sergey and Smith, Graeme and Smolin, John A},
  journal={Physical Review X},
  volume={6},
  number={2},
  pages={021043},
  year={2016},
  publisher={APS},
  url={https://doi.org/10.1103/PhysRevX.6.021043}
}

@article{horsman2012surface,
  title={Surface code quantum computing by lattice surgery},
  author={Horsman, Dominic and Fowler, Austin G and Devitt, Simon and Van Meter, Rodney},
  journal={New Journal of Physics},
  volume={14},
  number={12},
  pages={123011},
  year={2012},
  publisher={IOP Publishing},
  url={https://doi.org/10.1088/1367-2630/14/12/123011}
}

@article{panteleev2021degenerate,
  title={Degenerate quantum {LDPC} codes with good finite length performance},
  author={Panteleev, Pavel and Kalachev, Gleb},
  journal={Quantum},
  volume={5},
  pages={585},
  year={2021},
  publisher={Verein zur F{\"o}rderung des Open Access Publizierens in den Quantenwissenschaften},
  url = {https://doi.org/10.22331/q-2021-11-22-585}
}

@article{hastings2016weight,
  title={Weight reduction for quantum codes},
  author={Hastings, Matthew B},
  journal={arXiv preprint arXiv:1611.03790},
  year={2016},
  url={https://doi.org/10.48550/arXiv.1611.03790}
}

@article{hastings2021quantum,
  title={On quantum weight reduction},
  author={Hastings, Matthew B},
  journal={arXiv preprint arXiv:2102.10030},
  year={2021},
  url={
https://doi.org/10.48550/arXiv.2102.10030}
}

@article{sabo2024weight,
  title={Weight-reduced stabilizer codes with lower overhead},
  author={Sabo, Eric and Gunderman, Lane G and Ide, Benjamin and Vasmer, Michael and Dauphinais, Guillaume},
  journal={PRX Quantum},
  volume={5},
  number={4},
  pages={040302},
  year={2024},
  publisher={APS},
  url={https://doi.org/10.1103/PRXQuantum.5.040302}
}

@article{wills2023tradeoff,
  title={Tradeoff constructions for quantum locally testable codes},
  author={Wills, Adam and Lin, Ting-Chun and Hsieh, Min-Hsiu},
  journal={IEEE Transactions on Information Theory},
  year={2024},
  publisher={IEEE}
}

@inproceedings{shor1996fault,
  title={Fault-tolerant quantum computation},
  author={Shor, Peter W},
  booktitle={Proceedings of 37th conference on foundations of computer science},
  pages={56--65},
  year={1996},
  organization={IEEE},
  url = {https://dl.acm.org/doi/10.5555/874062.875509}
}

@phdthesis{gottesman1997stabilizer,
  title={Stabilizer Codes and Quantum Error Correction},
  author={Gottesman, Daniel Eric},
  year={1997},
  school={California Institute of Technology},
  url={https://arxiv.org/abs/quant-ph/9705052}
}

@article{Steane1996,
author = {Steane, A. M.},
doi = {10.1103/PhysRevLett.77.793},
issn = {10797114},
journal = {Physical Review Letters},
number = {5},
title = {{Error correcting codes in quantum theory}},
volume = {77},
year = {1996}
}

@article{preskill1998reliable,
author = {Preskill, John},
doi = {10.1098/rspa.1998.0167},
isbn = {1364-5021},
issn = {13645021},
journal = {Proceedings of the Royal Society A: Mathematical, Physical and Engineering Sciences},
keywords = {Fault-tolerant computation,Quantum leaks,Reliable quantum computers},
number = {1969},
pages = {385--410},
pmid = {12912782},
publisher = {The Royal Society},
title = {{Reliable quantum computers}},
url = {http://rspa.royalsocietypublishing.org/content/454/1969/385},
volume = {454},
year = {1998}
}

@article{preskill1997fault,
  title={Fault-tolerant quantum computation},
  author={Preskill, John},
  journal={Introduction to quantum computation and information},
  volume={213},
  year={1998},
  publisher={World Scientific},
  url = {https://arxiv.org/abs/quant-ph/9712048}
}

@article{knill1998resilientQC,
  title={Resilient quantum computation},
  author={Knill, Emanuel and Laflamme, Raymond and Zurek, Wojciech H},
  journal={Science},
  volume={279},
  number={5349},
  pages={342--345},
  year={1998},
  publisher={American Association for the Advancement of Science},
  url = {https://www.science.org/doi/10.1126/science.279.5349.342}
}

@article{Shor1995,
author = {Shor, Peter W.},
doi = {10.1103/PhysRevA.52.R2493},
issn = {10502947},
journal = {Physical Review A},
month = {oct},
number = {4},
pages = {R2493},
publisher = {American Physical Society},
title = {{Scheme for reducing decoherence in quantum computer memory}},
url = {https://journals.aps.org/pra/abstract/10.1103/PhysRevA.52.R2493},
volume = {52},
year = {1995}
}

@inproceedings{aharonov1997fault,
	Author = {Aharonov, Dorit and Ben-Or, Michael},
	Booktitle = {Proceedings of the twenty-ninth annual ACM symposium on Theory of computing},
	Organization = {ACM},
	Pages = {176--188},
	Title = {Fault-tolerant quantum computation with constant error},
	Year = {1997},
    url = {https://dl.acm.org/doi/10.1145/258533.258579}
}

@article{kitaev1997quantum,
	Author = {Kitaev, A Yu},
	Journal = {Russian Mathematical Surveys},
	Number = {6},
	Pages = {1191--1249},
	Publisher = {Turpion Ltd},
	Title = {Quantum computations: algorithms and error correction},
	Volume = {52},
	Year = {1997},
    url = {https://iopscience.iop.org/article/10.1070/RM1997v052n06ABEH002155}
}

@article{gottesman2014fault,
	Author = {Gottesman, Daniel},
	Journal = {Quantum Information \& Computation},
	Number = {15-16},
	Pages = {1338--1372},
	Publisher = {Rinton Press, Incorporated},
	Title = {Fault-tolerant quantum computation with constant overhead},
	Volume = {14},
	Year = {2014},
    url = {https://dl.acm.org/doi/10.5555/2685179.2685184}
}

@article{tillich2014quantum,
	Author = {Tillich, Jean-Pierre and Z{\'e}mor, Gilles},
	Journal = {IEEE Transactions on Information Theory},
	Number = {2},
	Pages = {1193--1202},
	Publisher = {IEEE},
	Title = {Quantum {LDPC} codes with positive rate and minimum distance proportional to the square root of the blocklength},
	Volume = {60},
	Year = {2014},
    url = {https://ieeexplore.ieee.org/document/6671468}
}

@inproceedings{leverrier2015quantum,
	Author = {Leverrier, Anthony and Tillich, Jean-Pierre and Z{\'e}mor, Gilles},
	Booktitle = {Foundations of Computer Science (FOCS), 2015 IEEE 56th Annual Symposium on},
	Organization = {IEEE},
	Pages = {810--824},
	Title = {Quantum expander codes},
	Year = {2015},
    url = {https://ieeexplore.ieee.org/document/7354429}
}

@article{Panteleev2019,
   author = {Pavel Panteleev and Gleb Kalachev},
   doi = {10.22331/q-2021-11-22-585},
   journal = {Quantum},
   month = {4},
   publisher = {Verein zur Forderung des Open Access Publizierens in den Quantenwissenschaften},
   title = {Degenerate Quantum {LDPC} Codes With Good Finite Length Performance},
   volume = {5},
   url = {http://arxiv.org/abs/1904.02703 http://dx.doi.org/10.22331/q-2021-11-22-585},
   year = {2019},
}

@inproceedings{evra2020decodable,
  title={Decodable quantum {LDPC} codes beyond the square root distance barrier using high dimensional expanders},
  author={Evra, Shai and Kaufman, Tali and Z{\'e}mor, Gilles},
  booktitle={2020 IEEE 61st Annual Symposium on Foundations of Computer Science (FOCS)},
  pages={218--227},
  year={2020},
  organization={IEEE},
    url = {https://ieeexplore.ieee.org/document/9317916}
}

@inproceedings{hastings2020fiber,
  title={Fiber Bundle Codes: Breaking the {$N^{1/2}\text{poly}\log(N)$} Barrier for Quantum {LDPC} Codes},
  author={Hastings, Matthew B and Haah, Jeongwan and O'Donnell, Ryan},
  booktitle={Proceedings of the 53rd Annual ACM SIGACT Symposium on Theory of Computing},
  pages={1276--1288},
  year={2021},
  url = {https://dl.acm.org/doi/10.1145/3406325.3451005}
}

@article{panteleev2020quantum,
  title={Quantum {LDPC} codes with almost linear minimum distance},
  author={Panteleev, Pavel and Kalachev, Gleb},
  journal={IEEE Transactions on Information Theory},
  volume={68},
  number={1},
  pages={213--229},
  year={2021},
  publisher={IEEE},
  url={https://ieeexplore.ieee.org/document/9567703}
}

@article{breuckmann2020balanced,
  title={Balanced product quantum codes},
  author={Breuckmann, Nikolas P and Eberhardt, Jens N},
  journal={IEEE Transactions on Information Theory},
  volume={67},
  number={10},
  pages={6653--6674},
  year={2021},
  publisher={IEEE},
  url={https://ieeexplore.ieee.org/document/9490244}
}

@article{breuckmann2021ldpc,
  title={Quantum low-density parity-check codes},
  author={Breuckmann, Nikolas P and Eberhardt, Jens Niklas},
  journal={PRX quantum},
  volume={2},
  number={4},
  pages={040101},
  year={2021},
  publisher={APS},
  url={https://journals.aps.org/prxquantum/abstract/10.1103/PRXQuantum.2.040101}
}

@inproceedings{Dinur2023,
author = {Dinur, Irit and Hsieh, Min Hsiu and Lin, Ting Chun and Vidick, Thomas},
booktitle = {Proceedings of the Annual ACM Symposium on Theory of Computing},
doi = {10.1145/3564246.3585101},
issn = {07378017},
title = {{Good Quantum {LDPC} Codes with Linear Time Decoders}},
year = {2023},
url = {https://dl.acm.org/doi/10.1145/3564246.3585101}
}

@article{kitaev2003fault,
	Author = {Kitaev, A Yu},
	Date-Added = {2019-05-04 20:28:36 +0000},
	Date-Modified = {2019-05-04 20:28:36 +0000},
	Journal = {Annals of Physics},
	Number = {1},
	Pages = {2--30},
	Publisher = {Elsevier},
	Title = {Fault-tolerant quantum computation by anyons},
	Volume = {303},
	Year = {2003},
    url = {https://doi.org/10.1016/S0003-4916%2802%2900018-0}
}

@article{bravyi1998quantum,
	Author = {Bravyi, Sergey and Kitaev, Alexei Yu},
	Journal = {arXiv preprint quant-ph/9811052},
	Title = {Quantum codes on a lattice with boundary},
	Year = {1998},
    url = {https://arxiv.org/abs/quant-ph/9811052}
}

@article{dennis2002topological,
  title={Topological quantum memory},
  author={Dennis, Eric and Kitaev, Alexei and Landahl, Andrew and Preskill, John},
  journal={Journal of Mathematical Physics},
  volume={43},
  number={9},
  pages={4452--4505},
  year={2002},
  publisher={American Institute of Physics},
  url={https://doi.org/10.1063/1.1499754}
}

@article{Xu2023,
  title={Fast and parallelizable logical computation with homological product codes},
  author={Xu, Qian and Zhou, Hengyun and Zheng, Guo and Bluvstein, Dolev and Ataides, J Pablo Bonilla and Lukin, Mikhail D and Jiang, Liang},
  journal={Physical Review X},
  volume={15},
  number={2},
  pages={021065},
  year={2025},
  publisher={APS},
  url={https://doi.org/10.1103/PhysRevX.15.021065}
}

@article{doi:10.1063/1.4939783,
   author = {Michael E. Beverland and Oliver Buerschaper and Robert Koenig and Fernando Pastawski and John Preskill and Sumit Sijher},
   doi = {10.1063/1.4939783},
   issn = {00222488},
   issue = {2},
   journal = {Journal of Mathematical Physics},
   pages = {22201},
   title = {Protected gates for topological quantum field theories},
   volume = {57},
   url = {http://dx.doi.org/10.1063/1.4939783},
   year = {2016},
}

@article{Bombin2006TQD,
   author = {H. Bombin and M. A. Martin-Delgado},
   doi = {10.1103/PhysRevLett.97.180501},
   issn = {00319007},
   issue = {18},
   journal = {Physical Review Letters},
   month = {5},
   title = {Topological quantum distillation},
   volume = {97},
   url = {http://arxiv.org/abs/quant-ph/0605138 http://dx.doi.org/10.1103/PhysRevLett.97.180501},
   year = {2006},
}

@article{yoshida2015gapped,
   author = {Beni Yoshida},
   doi = {10.1016/j.aop.2016.12.014},
   issn = {1096035X},
   journal = {Annals of Physics},
   keywords = {Fault-tolerance,Gapped boundary,Quantum error-correcting code,Topological order},
   pages = {387-413},
   title = {Gapped boundaries, group cohomology and fault-tolerant logical gates},
   volume = {377},
   year = {2017},
}

@article{Raussendorf2007,
   author = {Robert Raussendorf and Jim Harrington},
   doi = {10.1103/PhysRevLett.98.190504},
   issn = {00319007},
   issue = {19},
   journal = {Physical Review Letters},
   title = {Fault-tolerant quantum computation with high threshold in two dimensions},
   volume = {98},
   year = {2007},
   url = {https://doi.org/10.1103/PhysRevLett.98.190504}
}

@article{Landahl2014,
   author = {Andrew J. Landahl and Ciaran Ryan-Anderson},
   keywords = {0367Lx,PACS numbers},
   month = {7},
   title = {Quantum computing by color-code lattice surgery},
   url = {https://arxiv.org/abs/1407.5103v1},
    journal = {arXiv preprint arXiv:1407.5103},
   year = {2014},
}

@article{Paetznick2013,
   author = {Adam Paetznick and Ben W. Reichardt},
   doi = {10.1103/PhysRevLett.111.090505},
   issn = {00319007},
   issue = {9},
   journal = {Physical Review Letters},
   month = {4},
   title = {Universal fault-tolerant quantum computation with only transversal gates and error correction},
   volume = {111},
   url = {http://arxiv.org/abs/1304.3709 http://dx.doi.org/10.1103/PhysRevLett.111.090505},
   year = {2013},
}

@article{TJOC2013,
   author = {Tomas Jochym-O'Connor and Raymond Laflamme},
   doi = {10.1103/PhysRevLett.112.010505},
   issue = {1},
   journal = {Physical Review Letters},
   month = {9},
   title = {Using concatenated quantum codes for universal fault-tolerant quantum gates},
   volume = {112},
   url = {http://arxiv.org/abs/1309.3310 http://dx.doi.org/10.1103/PhysRevLett.112.010505},
   year = {2013},
}

@article{Bombin2015,
   author = {H\'{e}ctor Bomb\'{i}n},
   doi = {10.1088/1367-2630/17/8/083002},
   issn = {13672630},
   issue = {8},
   journal = {New Journal of Physics},
   keywords = {gauge fixing,topological color code,topological stabilizer code,transversal gate},
   month = {11},
   publisher = {Institute of Physics Publishing},
   title = {Gauge color codes: Optimal transversal gates and gauge fixing in topological stabilizer codes},
   volume = {17},
   url = {https://arxiv.org/abs/1311.0879v6},
   year = {2015},
}

@article{Anderson2014,
   author = {Jonas T. Anderson and Guillaume Duclos-Cianci and David Poulin},
   doi = {10.1103/PhysRevLett.113.080501},
   issue = {8},
   journal = {Physical Review Letters},
   keywords = {()},
   month = {3},
   publisher = {American Physical Society},
   title = {Fault-tolerant conversion between the Steane and Reed-Muller quantum codes},
   volume = {113},
   url = {http://arxiv.org/abs/1403.2734 http://dx.doi.org/10.1103/PhysRevLett.113.080501},
   year = {2014},
}

@article{Fowler2018,
   author = {Austin G. Fowler and Craig Gidney},
   month = {8},
   title = {Low overhead quantum computation using lattice surgery},
   url = {https://arxiv.org/abs/1808.06709v4},
    journal={arXiv preprint arXiv:1808.06709},
   year = {2018},
}

@article{Chamberland2021,
   author = {Christopher Chamberland and Earl T. Campbell},
   doi = {10.1103/PRXQuantum.3.010331},
   issue = {1},
   journal = {PRX Quantum},
   month = {9},
   publisher = {American Physical Society},
   title = {Universal quantum computing with twist-free and temporally encoded lattice surgery},
   volume = {3},
   url = {http://arxiv.org/abs/2109.02746 http://dx.doi.org/10.1103/PRXQuantum.3.010331},
   year = {2021},
}

@article{Cowtan2023,
   author = {Alexander Cowtan and Simon Burton},
   doi = {10.22331/q-2024-05-14-1344},
   journal = {Quantum},
   month = {1},
   publisher = {Verein zur Forderung des Open Access Publizierens in den Quantenwissenschaften},
   title = {{CSS} code surgery as a universal construction},
   volume = {8},
   url = {http://arxiv.org/abs/2301.13738 http://dx.doi.org/10.22331/q-2024-05-14-1344},
   year = {2023},
}

@article{Cowtan2024,
   author = {Alexander Cowtan},
   journal = {arXiv:2407.09423},
   month = {7},
   title = {{SSIP}: automated surgery with quantum {LDPC} codes},
   url = {https://arxiv.org/abs/2407.09423v1},
   year = {2024},
}

@article{Scruby2024,
   author = {Thomas R. Scruby and Arthur Pesah and Mark Webster},
   month = {8},
   title = {Quantum Rainbow Codes},
   url = {https://arxiv.org/abs/2408.13130v1},
   year = {2024},
    journal={arXiv preprint arXiv:2408.13130}
}

@article{Zhu2023,
  title = {Non-Clifford and Parallelizable Fault-Tolerant Logical Gates on Constant and Almost-Constant Rate Homological Quantum Low-Density Parity-Check Codes via Higher Symmetries},
  author = {Zhu, Guanyu and Sikander, Shehryar and Portnoy, Elia and Cross, Andrew W. and Brown, Benjamin J.},
  journal = {PRX Quantum},
  volume = {6},
  issue = {4},
  pages = {040361},
  numpages = {43},
  year = {2025},
  month = {Dec},
  publisher = {American Physical Society},
  doi = {10.1103/wcxs-w69t},
  url = {https://link.aps.org/doi/10.1103/wcxs-w69t}
}

@article{Zhang2024,
   author = {Guo Zhang and Ying Li},
   title = {Time-efficient logical operations on quantum {LDPC} codes},
   url = {https://arxiv.org/abs/2408.01339v2},
   year = {2024},
    journal={arXiv preprint arXiv:2408.01339}
}

@article{Williamson2016,
   author = {Dominic J. Williamson},
   doi = {10.1103/physrevb.94.155128},
   issue = {15},
   journal = {Physical Review B},
   month = {3},
   publisher = {American Physical Society},
   title = {Fractal symmetries: Ungauging the cubic code},
   volume = {94},
   url = {https://arxiv.org/abs/1603.05182v3},
   year = {2016},
}

@article{Williamson2020a,
   author = {Dominic J. Williamson and Trithep Devakul},
   doi = {10.1103/PhysRevB.103.155140},
   issn = {24699969},
   issue = {15},
   journal = {Physical Review B},
   month = {7},
   title = {Type-{II} fractons from coupled spin chains and layers},
   volume = {103},
   url = {http://arxiv.org/abs/2007.07894},
   year = {2021},
}

@article{Rakovszky2023,
   author = {Tibor Rakovszky and Vedika Khemani},
   month = {10},
   title = {The Physics of (good) {LDPC} Codes {I.} Gauging and dualities},
   url = {https://arxiv.org/abs/2310.16032v1},
   year = {2023},
    journal={arXiv preprint arXiv:2310.16032}
}

@article{Tantivasadakarn2021,
   author = {Nathanan Tantivasadakarn and Ryan Thorngren and Ashvin Vishwanath and Ruben Verresen},
   doi = {10.1103/PhysRevX.14.021040},
   issue = {2},
   journal = {Physical Review X},
   month = {12},
   publisher = {American Physical Society},
   title = {Long-range entanglement from measuring symmetry-protected topological phases},
   volume = {14},
   url = {http://arxiv.org/abs/2112.01519 http://dx.doi.org/10.1103/PhysRevX.14.021040},
   year = {2021},
}

@article{Tantivasadakarn2022,
   author = {Nathanan Tantivasadakarn and Ashvin Vishwanath and Ruben Verresen},
   doi = {10.1103/PRXQuantum.4.020339},
   issue = {2},
   journal = {PRX Quantum},
   month = {9},
   publisher = {American Physical Society},
   title = {Hierarchy of topological order from finite-depth unitaries, measurement and feedforward},
   volume = {4},
   url = {http://arxiv.org/abs/2209.06202 http://dx.doi.org/10.1103/PRXQuantum.4.020339},
   year = {2022},
}

@article{kubica2018ungauging,
   author = {Aleksander Kubica and Beni Yoshida},
   title = {Ungauging quantum error-correcting codes},
   url = {http://arxiv.org/abs/1805.01836},
   year = {2018},
    journal={arXiv preprint arXiv:1805.01836}
}

@article{Kramers1941,
   author = {H. A. Kramers and G. H. Wannier},
   doi = {10.1103/PhysRev.60.252},
   issn = {0031899X},
   issue = {3},
   journal = {Physical Review},
   month = {8},
   pages = {252-262},
   publisher = {American Physical Society},
   title = {Statistics of the two-dimensional ferromagnet. Part {I}},
   volume = {60},
   url = {https://journals.aps.org/pr/abstract/10.1103/PhysRev.60.252},
   year = {1941},
}

@article{Wegner1971,
   author = {Franz J. Wegner},
   doi = {10.1063/1.1665530},
   issn = {00222488},
   issue = {10},
   journal = {Journal of Mathematical Physics},
   title = {Duality in generalized Ising models and phase transitions without local order parameters},
   volume = {12},
   year = {1971},
}

@article{kogut1975hamiltonian,
   author = {John Kogut and Leonard Susskind},
   doi = {10.1103/PhysRevD.11.395},
   isbn = {0022-5282},
   issn = {05562821},
   issue = {2},
   journal = {Physical Review D},
   month = {1},
   pages = {395-408},
   pmid = {11493786},
   publisher = {American Physical Society},
   title = {Hamiltonian formulation of {Wilson's} lattice gauge theories},
   volume = {11},
   url = {https://link.aps.org/doi/10.1103/PhysRevD.11.395},
   year = {1975},
}

@article{Vijay2016,
   author = {Sagar Vijay and Jeongwan Haah and Liang Fu},
   doi = {10.1103/physrevb.94.235157},
   issue = {23},
   journal = {Physical Review B},
   month = {3},
   publisher = {American Physical Society},
   title = {Fracton Topological Order, Generalized Lattice Gauge Theory and Duality},
   url = {https://arxiv.org/abs/1603.04442v1},
   year = {2016},
}

@article{Dolev2021,
   author = {Kfir Dolev and Vladimir Calvera and Sam Cree and Dominic J. Williamson},
   doi = {10.1007/JHEP05(2022)158},
   issue = {5},
   journal = {Journal of High Energy Physics},
   keywords = {AdS-CFT Correspondence,Gauge Symmetry,Global Symmetries},
   month = {8},
   publisher = {Springer Science and Business Media Deutschland GmbH},
   title = {Gauging the bulk: generalized gauging maps and holographic codes},
   volume = {2022},
   url = {http://arxiv.org/abs/2108.11402 http://dx.doi.org/10.1007/JHEP05(2022)158},
   year = {2021},
}

@article{Davydova2024,
  title={Universal fault tolerant quantum computation in 2D without getting tied in knots},
  author={Davydova, Margarita and Bauer, Andreas and de la Fuente, Julio C Magdalena and Webster, Mark and Williamson, Dominic J and Brown, Benjamin J},
  journal={arXiv preprint arXiv:2503.15751},
  year={2025},
  url={https://doi.org/10.48550/arXiv.2503.15751}
}

@article{Xu2024Fast,
   author = {Qian Xu and Hengyun Zhou and Guo Zheng and Dolev Bluvstein and J Pablo and Bonilla Ataides and Mikhail D Lukin and Liang Jiang},
   month = {7},
   title = {Fast and Parallelizable Logical Computation with Homological Product Codes},
   url = {https://arxiv.org/abs/2407.18490v1},
   year = {2024},
    journal={arXiv preprint arXiv:2407.18490}
}

@article{Steane1996Active,
   abstract = {Active stabilisation of a quantum system is the active suppression of noise (such as decoherence) in the system, without disrupting its unitary evolution. Quantum error correction suggests the possibility of achieving this, but only if the recovery network can suppress more noise than it introduces. A general method of constructing such networks is proposed, which gives a substantial improvement over previous fault tolerant designs. The construction permits quantum error correction to be understood as essentially quantum state synthesis. An approximate analysis implies that algorithms involving very many computational steps on a quantum computer can thus be made possible.},
   author = {Andrew Steane},
   doi = {10.1103/PhysRevLett.78.2252},
   issue = {11},
   journal = {Physical Review Letters},
   month = {11},
   pages = {2252-2255},
   title = {Active stabilisation, quantum computation and quantum state synthesis},
   volume = {78},
   url = {http://arxiv.org/abs/quant-ph/9611027 http://dx.doi.org/10.1103/PhysRevLett.78.2252},
   year = {1996},
}

@article{Beverland2024,
   author = {Michael E. Beverland and Shilin Huang and Vadym Kliuchnikov},
   month = {1},
   title = {Fault tolerance of stabilizer channels},
   url = {https://arxiv.org/abs/2401.12017v2},
   year = {2024},
    journal={arXiv preprint arXiv:2401.12017}
}

@article{McEwen2023,
   author = {Matt McEwen and Dave Bacon and Craig Gidney},
   doi = {10.22331/q-2023-11-07-1172},
   journal = {Quantum},
   month = {2},
   publisher = {Verein zur Forderung des Open Access Publizierens in den Quantenwissenschaften},
   title = {Relaxing Hardware Requirements for Surface Code Circuits using Time-dynamics},
   volume = {7},
   url = {http://arxiv.org/abs/2302.02192 http://dx.doi.org/10.22331/q-2023-11-07-1172},
   year = {2023},
}

@article{Webster2023,
   author = {Mark A. Webster and Armanda O. Quintavalle and Stephen D. Bartlett},
   doi = {10.1088/1367-2630/acfc5f},
   issue = {10},
   journal = {New Journal of Physics},
   keywords = {quantum computation,quantum error correction,quantum information},
   month = {3},
   publisher = {Institute of Physics},
   title = {Transversal Diagonal Logical Operators for Stabiliser Codes},
   volume = {25},
   url = {http://arxiv.org/abs/2303.15615 http://dx.doi.org/10.1088/1367-2630/acfc5f},
   year = {2023},
}

@article{Quintavalle2022,
   author = {Armanda O. Quintavalle and Paul Webster and Michael Vasmer},
   doi = {10.22331/q-2023-10-24-1153},
   journal = {Quantum},
   month = {4},
   pages = {1153},
   publisher = {Verein zur Forderung des Open Access Publizierens in den Quantenwissenschaften},
   title = {Partitioning qubits in hypergraph product codes to implement logical gates},
   volume = {7},
   url = {http://arxiv.org/abs/2204.10812 http://dx.doi.org/10.22331/q-2023-10-24-1153},
   year = {2022},
}

@article{XanaduDraft,
  title={Fault-tolerant logical measurements via homological measurement},
  author={Ide, Benjamin and Gowda, Manoj G and Nadkarni, Priya J and Dauphinais, Guillaume},
  journal={Physical Review X},
  volume={15},
  number={2},
  pages={021088},
  year={2025},
  publisher={APS},
  url={https://doi.org/10.1103/PhysRevX.15.021088}
}

@article{Cowtan2022Algebraic,
  title={Algebraic aspects of boundaries in the Kitaev quantum double model},
  author={Cowtan, Alexander and Majid, Shahn},
  journal={Journal of Mathematical Physics},
  volume={64},
  number={10},
  year={2023},
  publisher={AIP Publishing},
  url={https://doi.org/10.1063/5.0127285}
}

@article{Swaroop2024,
   author = {Esha Swaroop and Tomas Jochym-O'Connor and Theodore J. Yoder},
   month = {10},
   title = {Universal adapters between quantum {LDPC} codes},
   url = {https://arxiv.org/abs/2410.03628v1},
   year = {2024},
    journal={arXiv preprint arXiv:2410.03628}
}

@inproceedings{Bacon2015,
   author = {Dave Bacon and Steven T. Flammia and Aram W. Harrow and Jonathan Shi},
   city = {New York, New York, USA},
   doi = {10.1145/2746539.2746608},
   isbn = {9781450335362},
   issn = {07378017},
   booktitle = {Proceedings of the Annual ACM Symposium on Theory of Computing},
   month = {6},
   pages = {327-334},
   publisher = {Association for Computing Machinery},
   title = {Sparse quantum codes from quantum circuits},
   volume = {14-17-June},
   url = {http://dl.acm.org/citation.cfm?doid=2746539.2746608},
   year = {2015},
}

@article{WireCodesPreprint,
   author = {Nou\'{e}dyn Baspin and Dominic J.~Williamson},
   isbn = {2410.10194v1},
   month = {10},
   title = {Wire Codes},
   url = {https://arxiv.org/abs/2410.10194v1},
   year = {2024},
    journal={arXiv preprint arXiv:2410.10194}
}

@article{sabidussi1959graph,
  title={Graph multiplication},
  author={Sabidussi, Gert},
  journal={Mathematische Zeitschrift},
  volume={72},
  number={1},
  pages={446--457},
  year={1959},
  publisher={Springer},
  url={https://doi.org/10.1007/BF01162967}
}

@article{vizing1963cartesian,
  title={The {Cartesian} product of graphs},
  author={Vizing, Vladim G},
  journal={Vycisl. Sistemy},
  volume={9},
  number={30-43},
  pages={33},
  year={1963}
}

@article{campbell2019theory,
  title={A theory of single-shot error correction for adversarial noise},
  author={Campbell, Earl T},
  journal={Quantum Science and Technology},
  volume={4},
  number={2},
  pages={025006},
  year={2019},
  publisher={IOP Publishing},
  url={https://doi.org/10.1088/2058-9565/aafc8f}
}

@misc{supplement,
  note = "See Supplementary Information."
}

@article{he2025extractors,
  title={Extractors: {QLDPC} Architectures for Efficient Pauli-Based Computation},
  author={He, Zhiyang and Cowtan, Alexander and Williamson, Dominic J and Yoder, Theodore J},
  journal={arXiv preprint arXiv:2503.10390},
  year={2025},
  url={https://doi.org/10.48550/arXiv.2503.10390}
}

@article{cowtan2025parallel,
  title={Parallel logical measurements via quantum code surgery},
  author={Cowtan, Alexander and He, Zhiyang and Williamson, Dominic J and Yoder, Theodore J},
  journal={arXiv preprint arXiv:2503.05003},
  year={2025},
  url={https://doi.org/10.48550/arXiv.2503.05003}
}

@article{vuillot2019code,
  title={Code deformation and lattice surgery are gauge fixing},
  author={Vuillot, Christophe and Lao, Lingling and Criger, Ben and Almud{\'e}ver, Carmen Garc{\'\i}a and Bertels, Koen and Terhal, Barbara M},
  journal={New Journal of Physics},
  volume={21},
  number={3},
  pages={033028},
  year={2019},
  publisher={IOP Publishing},
  url={https://doi.org/10.1088/1367-2630/ab0199}
}

@misc{yuan2026parsimoniousquantumlowdensityparitycheck,
      title={Parsimonious Quantum Low-Density Parity-Check Code Surgery}, 
      author={Andrew C. Yuan and Alexander Cowtan and Zhiyang He and Ting-Chun Lin and Dominic J. Williamson},
      year={2026},
      eprint={2603.05082},
      archivePrefix={arXiv},
      primaryClass={quant-ph},
      url={https://arxiv.org/abs/2603.05082}, 
}

\vspace{1cm}

\noindent\textbf{\large Methods}

\noindent\textbf{Gauging measurement} --- Here we prove Theorem~\ref{thm:GaugingMeasurement}, showing that gauging measurement correctly performs a logical measurement. We first introduce definitions of boundary and coboundary maps on a graph $G$. 

\begin{definition}[$\mathbb{Z}_2$-Boundary and coboundary maps]
    In this work we use binary vectors to indicate collections of vertices, edges, and cycles of the graph $G$. 
    The boundary map $\partial$ on an edge vector is a $\mathbb{Z}_2$-linear map that is defined by its action on a single edge $\partial e = v+v'$ where $v$ and $v'$ are the adjacent edges of $e$. 
    The coboundary map is given by the transpose of the boundary map $\delta = \partial^T$ and satisfies $\delta v = \sum_{e \ni v} e$. 
    Given a choice of a collection of cycles in $G$ we also define a second boundary map $\partial_2$ whose action on a cycle is $\partial_2 c = \sum_{e \in c} e$. 
    Similarly, we define a second coboundary map $\delta_2=\partial_2^T$ which acts on a single edge as $\delta_2 e = \sum_{c \ni e } c$.
\end{definition}

\begin{remark}
    \label{rem:Exactitude}
    The maps $\partial_2,$ and $\partial $ form an exact sequence if a generating set of cycles is chosen, and similarly for $\delta, \delta_2$. 
    These sequences are not short exact as $\delta$ has a nontrivial kernel given by all vertices. 
\end{remark}

Throughout this work we abuse notation by identifying the binary vector associated to a set of vertices, edges, or cycles with the set itself. 
Where this is done, the meaning should be clear from context.

\begin{proof}[Proof of Theorem~\ref{thm:GaugingMeasurement}]
Applying the gauging measurement to an initial state $\ket{\Psi}$ results in
\begin{align}
    \prod_v \frac{1}{2}(\mathds{1}+\varepsilon_v A_v) \ket{\Psi} \ket{0}_E
\end{align}
up to an overall normalization.
Here, $\varepsilon_v$ is the observed result of measuring the operator $A_v=X_v \prod_{e \ni v} X_e$ and $\ket{0}_E=\ket{0}^{\otimes E}$. 
Ungauging by measuring out the edge degrees of freedom in the $Z$ basis and discarding them results in the state 
\begin{align}
    \bra{z_e}_E\prod_v \frac{1}{2}(\mathds{1}+\varepsilon_v A_v) \ket{\Psi} \ket{0}_E ,
\end{align}
up to an overall normalization. 
Here, $\bra{z_e}_E=\otimes_e\bra{z_e}_e$ where $z_e$ is the observed result of measuring the operator $Z_e$. 
Next, we expand the product of projection operators into a sum
\begin{align}
    \bra{z_e}_E \frac{1}{2^V} \sum_{c \in C^0(G,\mathbb{Z}_2)} \varepsilon(c) X_V(c)X_E(\delta c) \ket{\Psi} \ket{0}_E ,
\end{align}
where the sum is over $\mathbb{Z}_2$-valued 0-cochains on $G$, 
\begin{align}
    \varepsilon(c)&=\prod_{v}\varepsilon_v^{c_v} ,
    \\
    X_V(c)&=\prod_v X_v^{c_v} ,
    \\
    X_E(\delta c)&=\prod_e X_e^{\{\delta c\}_e}.
\end{align}
Since ${\bra{z_e}_E X_E(\delta c) \ket{0}_E}$ is zero unless $z_e = \{\delta c\}_e$ on all edges, we can rewrite the ungauged state as 
\begin{align}
    \frac{1}{2^V} \sum_{c \text{ s.t. } \delta c = z} \varepsilon(c) X_V(c) \ket{\Psi}
    \nonumber \\
    =
    \frac{1}{2^V} X_V(c')\sum_{c \in Z^0(G,\mathbb{Z}_2)} \varepsilon(c) X_V(c) \ket{\Psi}
\end{align}
for a fixed $c'$ that satisfies $\delta c'=z$ and $Z^0(G,\mathbb{Z}_2)$ the group of 0-cocycles on $G$. 
For a connected graph $G$ there are only two elements $c\in Z^0(G,\mathbb{Z}_2)$, either $c_v=1$ for all vertices or $c_v=0$ for all vertices. 
Hence, the ungauged state is 
\begin{align}
    X_V(c') \frac{1}{2}  (\mathds{1}+\sigma L) \ket{\Psi}
\end{align}
where $L=\prod_v X_v$ is the logical operator being measured and $\sigma = \prod_v \epsilon_v$ is the observed outcome of the logical measurement.
Here, $X_V(c')$ is a Pauli byproduct operator determined by the observed measurement outcomes.

This byproduct operator is the same as the one defined in the final step of Algorithm~\ref{alg:GaugeLogical}
\end{proof}

\noindent\textbf{Graph desiderata and constructions} --- In this section we analyze the deformed code that is created after measuring the $A_v$ terms in the gauging measurement procedure. The auxiliary graph $G$ contains vertices $V$ and edges $E$.

\begin{lemma}[Deformed code]\label{lem:deformed}
    The following form a generating set of checks for the deformed code:
    \begin{itemize}
        \item $A_v = X_v \prod_{e \ni v} X_e$ for all $v\in V$ that are not dummy vertices.
        \item $A_v = \prod_{e \ni v} X_e$ for all dummy vertices $v\in V$. 
        \item $B_p = \prod_{e \in \gamma} Z_e$ for a generating set of cycles $\gamma\subseteq E$. 
        \item $\tilde{s}_i=s_i \prod_{e \in \mu_i } Z_e$ for all checks in the input code $s_i$ and appropriate perfect matchings $\mu_i\subseteq E$ for each of those checks.
    \end{itemize}
\end{lemma}

\begin{proof}
The $A_v$ operators become checks since they are measured during the gauging process. 
Both $B_p$ and $\tilde s_i$ operators can be thought of as arising from applying stabilizer update rules \cite{gottesman1997stabilizer} to the initial stabilizer group generated both by checks of the original code $\{s_i\}$ and the single-qubit $Z_e$ stabilizers on the edge qubits $e$.
Specifically, the $B_p$ operators originate from the $Z_e$ stabilizers because, for a product of these edge stabilizers to remain a check after measuring the $A_v=\prod_{e \ni v} X_e$ operators, it must commute with them all. 
This is equivalent to a product $\prod_{e\in\gamma}Z_e$ being over edges $\gamma$ that satisfies $\partial \gamma = 0$, i.e.~$\gamma$ represents a graph cycle. 
Similarly, the $\tilde{s}_i$ operators arise from products of $s_i$ and $Z_e$ stabilizers from the initialization step that commute with all $A_v$ operators. 
The specific $A_v$ operators that $s_i$ anticommutes with are those associated to vertices in the set $S_Z:=\text{supp}_Z(s_i)\cap\text{supp}(L)$, where $\text{supp}_Z(s_i)$ is the $Z$-type support of $s_i$.
Thus commuting with all $A_v$ requires a set of edges $\mu_i$ such that $\partial \mu_i = \text{supp}_Z(s_i)\cap\text{supp}(L)$, or in other words $\mu_i$ is a perfect matching of the vertices in $S_Z$.
\end{proof}

By counting the qubits and checks, one sees that the dimension of the codespace of the deformed code is only reduced by a single qubit compared with the original code, corresponding to the logical $L$ that is measured by the gauging deformation.
A total of $|E|$ new qubits are introduced along with $|V|$ new independent X-type checks and a number of new independent $Z$-type checks that is equal to $C$, the number of generating cycles of $G$. 
We then have the well-known relation $|E|-C-|V|=-1$ for a connected graph. 
This provides an alternative method to Theorem~\ref{thm:GaugingMeasurement} to see that no logical information is lost beyond the measurement of $L$. 

There is a large degree of freedom when specifying a generating set of checks in the deformed code. 
If we fix the choice of $A_v$ and $s_i$ checks, then the freedom can be associated to choosing cycles and matchings $\gamma$ and $\mu_i$. 
These choices do not affect the code space, as the $B_p$ operators for any generating set of cycles in $G$ generate the same algebra. 
Furthermore, the $\tilde{s}_j$ operator for two different choices of path $\gamma$ are related by multiplication with $B_p$ operators. 
In practice, we aim to choose a set of paths $\gamma$ and $\mu_i$ that result in a set of checks with small weight and that result in small qubit degree (the number of checks a qubit is involved in). 
Indeed, this is the point of the first three graph desiderata in Theorem~\ref{thm:GDesiderata}, which ensure the deformed code is LDPC.
The necessity and sufficiency of the desiderata for that purpose should now be clear from inspecting the full set of deformed code checks in Lemma~\ref{lem:deformed}.

To complete the proof of Theorem~\ref{thm:GDesiderata}, it remains to show that graph expansion implies the deformed code has code distance no smaller than the original code. We do that in the following lemma.

\begin{lemma}[Space fault-distance]
    \label{lem:spacedistance}
    The distance of the deformed code satisfies $d^* \geq \min(h(G),1) d$, where $h(G)$ is the Cheeger constant of $G$ and $d$ is the distance of the original code.
\end{lemma}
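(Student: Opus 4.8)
The plan is to lower bound the weight of an arbitrary nontrivial logical operator $\overline{P}$ of the deformed code by ungauging it at the level of operators. Concretely, I would multiply $\overline{P}$ by the Gauss's law operators $A_v$ and flux operators $B_p$ (all of which are deformed stabilizers) to eliminate its support on the edge qubits $\mathcal{E}$, producing an equivalent operator whose action on the $n$ original qubits is a Pauli $P'$. I would then argue that $P'$ is a nontrivial logical of the original code, so $|P'|\ge d$, and carefully account for the weight that the cleaning step spends on the edges versus the vertices; the factor $\min(h(G),1)$ will emerge as the worst-case price of this trade-off.

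First I would carry out the cleaning. Since $\overline{P}$ commutes with every $B_p=\prod_{e\in p}Z_e$, its $X$-type edge support is orthogonal to the cycle space of $G$, hence lies in the image of $\delta$ (the cut space is the orthogonal complement of the cycle space; cf.\ Remark~\ref{rem:Exactitude}) and equals $\delta c_0$ for some vertex subset $c_0$. Because $G$ is connected, $c_0$ is determined only up to complementation, and replacing $c_0$ by $V\setminus c_0$ merely multiplies the cleaning operator by the deformed stabilizer $L=\prod_v A_v$, so I may assume $|c_0|\le|V|/2$. Multiplying $\overline{P}$ by $\prod_{v\in c_0}A_v=X_V(c_0)X_E(\delta c_0)$ cancels the $X$-type edge support at the cost of adding $X_V(c_0)$ on the vertices; commutation with the $A_v$ then forces the leftover (purely $Z$-type) edge support $Z_E(z_0)$ to have $\partial z_0$ equal to the $Z$-type support of $P'$ on $\mathcal{L}$. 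Thus $\overline{P}$ is equivalent to $P'\,Z_E(z_0)$ with $P'$ supported only on the original qubits.

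Next I would verify that $P'$ is a nontrivial logical of the original code. Commutation of $P'\,Z_E(z_0)$ with $\tilde s_i=s_i\prod_{e\in\gamma_i}Z_e$ reduces, since the two $Z$-type edge factors commute, to $[P',s_i]=0$, so $P'$ commutes with every original check. If $P'$ were a product $\prod_{i\in T}s_i$ of original checks, then comparing $\overline{P}$ with $\prod_{i\in T}\tilde s_i$ would show they differ by $Z_E$ of an edge set with vanishing boundary (using that $\partial\gamma_i$ equals the $Z$-support of $s_i$ on $\mathcal{L}$), i.e.\ by a product of $B_p$'s, making $\overline{P}$ a deformed stabilizer — a contradiction. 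Hence $P'$ lies in a nontrivial logical class of the original code and $|P'|\ge d$.

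Finally I would combine the weight bookkeeping. Writing $\overline{P}_{\mathrm{orig}}$ for the restriction of the original $\overline{P}$ to the original qubits, we have $P'=\overline{P}_{\mathrm{orig}}\,X_V(c_0)$, so $\mathrm{supp}(P')\subseteq\mathrm{supp}(\overline{P}_{\mathrm{orig}})\cup c_0$ and therefore $|\overline{P}_{\mathrm{orig}}|\ge d-|c_0|$. On the edges, the original $\overline{P}$ carries the $X$-type support $\delta c_0$, so its edge weight is at least $|\delta c_0|\ge h(G)\min(|c_0|,|V\setminus c_0|)=h(G)|c_0|$ by the edge-expansion (Cheeger) inequality. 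Hence
\[
|\overline{P}| \;\ge\; \max\!\big(d-|c_0|,\,0\big)+h(G)|c_0| \;\ge\; \max\!\big(d-(1-h(G))|c_0|,\;h(G)|c_0|\big),
\]
and minimizing the right-hand side over $|c_0|\ge0$ gives $d$ when $h(G)\ge1$ (the first branch is nondecreasing in that regime) and $h(G)\,d$ when $h(G)<1$ (the two branches cross at $|c_0|=d$), so $|\overline{P}|\ge\min(h(G),1)\,d$; since $\overline{P}$ was arbitrary, $d^*\ge\min(h(G),1)\,d$. I expect the main obstacle to be exactly this accounting step: one must see that cancelling the $X$-type edge support of a deformed logical necessarily spent at least $h(G)|c_0|$ edge weight while adding at most $|c_0|$ vertex weight, and that $\min(h(G),1)$ is precisely the worst-case ratio of this exchange; by comparison, checking that the cleaned $P'$ is genuinely logical is routine symplectic bookkeeping and the expansion bound is standard.
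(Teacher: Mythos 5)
Your argument is essentially the paper's: clean the $X$-type edge support of a deformed logical onto the vertices using the $A_v$'s (justified by commutation with the $B_p$'s and exactness of the cocycle sequence), observe that the restriction to the original qubits is a nontrivial logical of the original code, and pay for the cleaning with the Cheeger inequality after normalizing $|c_0|\le|V|/2$ via the global stabilizer $\prod_v A_v$. Two places where you are actually more explicit than the paper: you verify nontriviality of $P'$ by showing that if it were a product of original checks then $\overline{P}$ would differ from $\prod_{i\in T}\tilde s_i$ by a product of $B_p$'s, and you carry out the optimization over $|c_0|$ that turns the two bounds $|\overline{P}_{\mathrm{orig}}|\ge d-|c_0|$ and $|\overline{P}|_E\ge h(G)|c_0|$ into $\min(h(G),1)\,d$; the paper states the conclusion of this accounting without spelling it out.

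The one genuine omission is that your proof only covers the case where the deformed code is built directly on a single-layer graph $G$ (the Fig.~\ref{fig:thin_Tanner} construction). In the general construction the graph is the cycle-sparsified $\bar{\bar{G}}$, whose extra layers consist of dummy vertices carrying no data qubits, while the Cheeger constant in the bound is still that of a single layer $G$. There your cocycle argument gives a vertex set $\tilde{\mathcal S}\subseteq V_{\bar{\bar{G}}}$ that includes dummy vertices, and cleaning with $A_v$ on a dummy vertex costs no vertex weight; one cannot directly apply $|\delta\tilde{\mathcal S}|\ge h(\bar{\bar{G}})|\tilde{\mathcal S}|$ and relate it to $h(G)$. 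The paper closes this by partitioning the layer-$0$ vertices into those whose copies all lie in $\tilde{\mathcal S}$ and those for which some copy is missing, and arguing that at least one copy of each coboundary edge of the fully-cleaned set retains $X$ support, so the single-layer Cheeger constant still governs the exchange rate. Your single-layer accounting is the heart of the matter and survives this extension, but as written the proof does not establish the lemma for the sparsified construction that the fault-tolerance theorem actually relies on.
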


\begin{proof}
    A logical operator on the deformed code can be written as $L'=i^\sigma L_X^V  L_Z^V L_X^E L_Z^E \tilde{L}$. Here $\tilde{L}$ captures the support of $L'$ that does not intersect the gauged logical, $\mathcal{S}_X$ denotes the $X$ support of $L'$, ${L_X^V=\prod_{v\in \mathcal{S}_X \cap V_{G}} X_v}$ captures the intersection of the $X$ support of $L'$ with the gauged logical, ${L_X^E=\prod_{e\in \mathcal{S}_X\cap E_{{G}}} X_e}$ captures the $X$ support of $L'$ on the edges introduced by the gauging procedure, and similarly for $Z$. 

     First, we consider the $X$-type component of the logical operator $L'$.
    The logical operator must commute with the $B_p$ checks by definition, hence ${\mathcal{S}_X^E=\mathcal{S}_X\cap E_{{G}}}$ is a 1-cocycle on the graph $G$ as it satisfies ${\delta_2 \mathcal{S}_X^E = 0}$.
    Since the $B_p$ terms are defined on a generating set of cycles we have that the sequence formed by $\delta,\delta_2$ is exact, see Remark~\ref{rem:Exactitude}.
    Hence, $\mathcal{S}_X^E = \delta \tilde{\mathcal{S}}_X^V$ for some set of vertices $\tilde{\mathcal{S}}_X^V \subset V_{{G}}$. 
    From this we have ${L_X^E = \overline{L}_X^V \prod_{v \in \tilde{\mathcal{S}}_X^V} A_v}$, where ${\overline{L}_X^V = \prod_{v \in \tilde{\mathcal{S}}_X^V} X_v}$. 
    We now have another logical representative that is equivalent to $L'$, given by ${\overline{L}=i^{\sigma} L_X^V \overline{L}_X^V L_Z^V L_Z^E\tilde{L}=L'\prod_{v \in \tilde{\mathcal{S}}_X^V} }A_v$.

    Next, we consider the $Z$-type component of the logical operator $L'$.
    We first point out that the deformed checks $\tilde{s}_i$ are given by the original checks, potentially multiplied with some $Z_e$ operators. 
    Similarly, the equivalent logical operator $\overline{L}$ is some operator on the original qubits potentially multiplied by $Z_e$ operators.

    From this we can see that the equivalent logical operator restricted to the qubits of the original code ${\overline{L}|_V=i^\sigma  L_X^V \overline{L}_X^VL_Z^V \tilde{L}}$ must be a logical operator of the original code. 
    This is because it must commute with the deformed checks, and since the additional operators on the edge qubits in the deformed checks are all of the form $Z_e$ they play no role in the commutation relations. 
    From this we see that the full equivalent logical operator $\overline{L}$ is obtained from the restricted logical $\overline{L}|_V$ via the gauging code deformation. 

    Hence, any logical in the deformed code is equivalent to a logical on the original code $\overline{L}|_V$ potentially multiplied by some $Z_e$ operators. 
    The weight of any nontrivial logical on the original code, such as $\overline{L}|_V$, is lower bounded by the distance $d$. 
    Hence, the weight of the unrestricted logical $\overline{L}$ is also lower bounded by $d$. 
    Furthermore, we can construct $\overline{L}$ to have support on no more than half the vertex qubits $\leq \frac{|V|}{2}$ by optionally multiplying $L'$ with the stabilizer $\prod_{v\in V_{\bar{G}}} A_v$. 
    
    We now lower bound the relative change in operator weight, induced by the equivalence under multiplication with vertex stabilizers that convert the deformed logical $\overline{L}$ back to the logical $L'$, by the Cheeger constant $h(G)$ of a single layer of the graph $G$. 
    In the worst case, multiplication with $\prod_{v \in \tilde{\mathcal{S}}_X^V} A_v$ to convert $\overline{L}$ to $L'$ removes the support on the vertex set $\tilde{\mathcal{S}}_X^V$ and adds support on the edge coboundary set $\delta \tilde{\mathcal{S}}_X^V$. 
    Then we apply $|\delta \tilde{\mathcal{S}}_X^V|\geq h(G) | \tilde{\mathcal{S}}_X^V|$ to achieve the distance bound on the logical $L'$. 
\end{proof}

In the rest of this subsection, we discuss the cycle sparsification step in more detail and ensure it preserves the first three desiderata and the deformed code distance as claimed in Theorem~\ref{thm:WorstCaseG}.

\begin{definition}[Cycle-sparsified graph $\bar{G}$]
    \label{def:cycle-sparsification}
    Given a graph $G$ and a constant $c$ we call the cycle-degree, a cycle-sparsification of $G$ is a new graph that is built by adding edges to copies of $G$ numbered $1,2\dots,r$, and alternatively referred to as \textit{layers} (as in Fig.~\ref{fig:thick_Tanner}, see also Fig.~\ref{fig:cycle_sparse}).
    One type of additional edges connect each vertex to its layer in the subsequent layer and serve to implement the cartesian product operation of graph $G$ with a path graph on $r$ vertices.  
    The other type of additional edges cellulate a cycle into triangles by connecting vertices as follows $\{(1,N-1), (N-1,2), (2,N-2), (N-2,3), \dots \}$ following an ordering of the vertices as they are visited when the cycle is traversed, see Figure~\ref{fig:CycleWeightReduction}. 
    If the original graph had cycle basis $\{\gamma_i\}$, the cycle-sparsified graph cellulates exactly one copy of each $\gamma_i$, and, given a sufficient number of layers $r$, the number of these cellulated cycles containing any particular edge can be chosen to be at most $c$ for any choice of $c\ge1$.
    We use $R_G^c$ to denote the minimal number of layers to achieve a cycle-sparsification of $G$ with constant $c$. Besides $|L|$ number of specified vertices in the 1st layer, all vertices in the cycle sparsified graph are dummy vertices.
\end{definition}

\begin{remark}
    The cycle-sparsified graph $\bar{G}$ may fail to satisfy desideratum 4. 
    This is because the cycle sparsification step does not preserve $h(G)\ge1$. In particular, once there are more than four copies of $G$ in the cycle-sparsified graph $\bar{G}$ the set $V_{0,1}=V_{G_0}\cup V_{G_1}$ of \textit{all} vertices on the first two copies of $G$ satisfies $|\delta V_{0,1}| = | V_{G_1}| = \frac{1}{2}|V_{0,1}|$ and hence the Cheeger constant is upper bounded by a half. 
    Below, we prove that $\bar{G}$ nonetheless results in a deformed code with the same distance bound as for $G$. 
    The reason behind this is that sufficient expansion in the first layer of $G$ that is attached directly to the logical operator suffices to guarantee the distance is preserved during the code deformation.  
\end{remark}

\begin{corollary}[Cycle-sparsified space fault-distance]
\label{cor:SparsifiedDistance}
    The distance of the deformed code based on the cycle-sparsified graph $\bar{G}$, see Definition~\ref{def:cycle-sparsification}, satisfies $d^*\geq (h(G),1) d$. 
\end{corollary}

\begin{proof}
Following the proof of Lemma~\ref{lem:spacedistance}, we have ${L_X^E = \overline{L}_X^V \prod_{v \in \tilde{\mathcal{S}}_X^V} A_v}$, where ${\overline{L}_X^V = \prod_{v \in \tilde{\mathcal{S}}_X^V \cap V_{G_0}} X_v}$. 
In the definition of $\overline{L}_X^V$, the product of operators is only over vertices in $G_0$ since the vertices in other layers are dummy vertices which do not support qubits. 
We now have that ${\overline{L}=i^{\sigma} L_X^V \overline{L}_X^V L_Z^V L_Z^E\tilde{L}=L'\prod_{v \in \tilde{\mathcal{S}}_X^V} }A_v$ is a logical representative equivalent to $L'$. 
The restriction of the logical $\overline{L}$ to the original qubits produces a logical in the original code, hence the weight of $\overline{L}$ is at least the distance of the original code.

To bound the distance of the general logical $L'$ in the deformed code, we now focus on the change in weight caused by the term $\prod_{v \in \tilde{\mathcal{S}}_X^V} A_v$. 
The size of the vertex set $\tilde{\mathcal{S}}_X^V\cap V_{G_0}$ can be taken $\leq \frac{|V_{G_0}|}{2}$, up to multiplying $L'$ with the measured logical which is now the stabilizer $\prod_{v\in\bar{G}} A_v$. 
The operator $\overline{L} \prod_{v \in \tilde{\mathcal{S}}_X^V} A_v$ has $X$ support on at least all edges in $\big(\delta (\tilde{\mathcal{S}}_X^V\cap V_{G_0}) \big) \cap E_{G_0}$. 
This is the set of edges inside $G_0$ that are in the coboundary of the relevant vertices in $G_0$. 
The size of this set satisfies ${|\big(\delta (\tilde{\mathcal{S}}_X^V\cap V_{G_0}) \big) \cap E_{G_0}|\geq h(G) |\tilde{\mathcal{S}}_X^V\cap V_{G_0} |}$. 
Hence, the relative weight of $L'$ and $\overline{L}$ is lower bounded by $h(G)$. 
\end{proof}

\begin{figure}[t]

\subfloat[Cycle sparsification]{\raisebox{3pt}%
{\includegraphics[width=0.48\columnwidth]{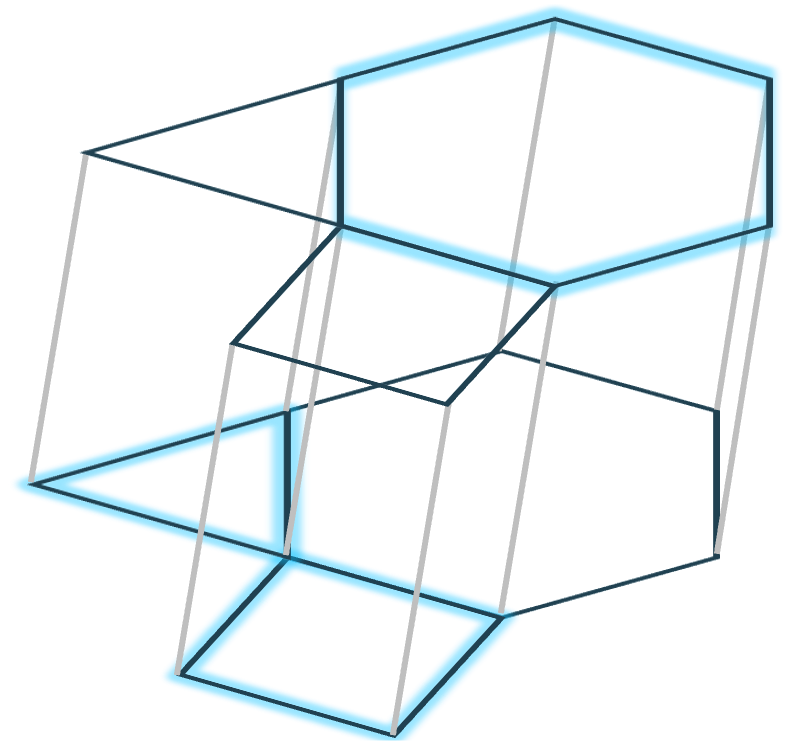}}
\label{fig:cycle_sparse}
}\hspace*{\fill}
\subfloat[Cellulation]{\raisebox{3pt}%
{\includegraphics[width=0.48\columnwidth]{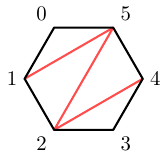}}
\label{fig:CycleWeightReduction}
}

\caption{\textbf{Cycle sparsification and cellulation:} \textbf{(a)} Cycle sparsification with two layers. Black edges are copies of the original graph and gray edges connect them. A complete cycle basis consists of the highlighted cycles and all length-4 cycles between layers that traverse exactly two black and two gray edges. The constant $c$ in Definition~\ref{def:cycle-sparsification} is 1 after this sparsification, whereas it was 2 for the original graph, and therefore $R^1_G=2$. \textbf{(b)} Cellulating a weight-six cycle (black) into triangles by adding additional edges (red). }
\end{figure}

\begin{lemma}
    \label{lem:SparsifiedDesiderata}
    Given a graph $G$ that satisfies the first two desiderata in Theorem~\ref{thm:GDesiderata}, its cycle-sparsification $\bar{G}$ satisfies the first three desiderata in Theorem~\ref{thm:GDesiderata}
\end{lemma}

\begin{proof}
    The degree of the cycle-sparsified graph $\bar{G}$ is upper bounded by the degree of $G$ plus the constant ${(\text{deg}(G)c+2)}$, where $c$ is the congestion number of cycles in the chosen basis that are assigned to a single layer of $\bar{G}$. 
    Hence, if the graph $G$ satisfies desideratum 1, so too does $\bar{G}$. 
    The cycle-sparsified graph $\bar{G}$ supports matchings $\mu_i$ that can be routed entirely through the first layer since that layer is already a connected graph containing all the non-dummy vertices. 
    Hence, assuming the graph $G$ satisfies desideratum 2, so too does $\bar{G}$. 
    The cycle-sparsified graph $\bar{G}$ has a cycle basis consisting of length-3 and 4 cycles by construction. 
    Hence, $\bar{G}$ satisfies desideratum 3. 
\end{proof}

We use the freedom in choosing checks of the deformed code per Lemma~\ref{lem:deformed} for the cycle-sparsified graph to take advantage of the layered structure. 
This leads to a very natural cycle basis constructed from cycles with length $\le4$ coming in two types. 
First, for each layer $i=1,2,\dots,r-1$ and each edge $e\in E$ of the original graph, there is a length-4 cycle $\gamma_{i,e}$, which consists of the copy of edge $e$ in layer $i$, the copy of the edge $e$ in layer $i+1$, and the two edges connecting the layers together and adjacent to those copies of $e$. 
Second, there are length-3 cycles resulting from the triangular cellulations.

In Definition~\ref{def:cycle-sparsification} we have chosen to cellulate the cycles into triangles as they have minimal weight. 
A similar procedure applies for squares, or even arbitrary polygons, which need not have a uniform number of edges. 
We note that using squares is also natural in the sense that square cycles already appear between layers in the cycle-sparsified graph.

For a constant degree graph there are $\Theta( |V|)$ cycles in a minimal generating set. 
For a random expander graph, making an appropriate choice, almost all generating cycles are expected to be of length $O(\log|V|)$. 
In this case we expect a cycle-degree of $O(\log|V|)$ and a number of layers $R_G^c=O(\log|V|)$ required for cycle-sparsification.
We remark that the decongestion lemma~\cite{freedman2021building} establishes a worst-case bound $R_G^c=O(\log^3|V|)$ for cycle-sparsification of a constant degree graph.
This leads to the overhead scaling bound in Theorem~\ref{thm:WorstCaseG}, see Ref.~\cite{he2025extractors} for a detailed review of this point. 
On the other hand, for some cases of the gauging measurement no sparsification is required, such as the BB code example presented later. 
It would be interesting to understand conditions under which this occurs generally.

\noindent\textbf{Fault-tolerant implementation} --- 
In this section we provide an overview of the fault tolerant operation of the gauging measurement procedure. 
Detailed definitions and proofs of fault tolerance are provided in the spacetime code and fault distance section of the Supplementary information~\cite{supplement}.

To prove Theorem~\ref{thm:fault-tolerance}, we analyze gauging measurement in a standard phenomenological noise model \cite{dennis2002topological} in which qubits and measurements of Pauli operators can suffer errors, but we ignore circuit-level details that may spread errors or introduce other sources of correlated errors. Our main strategy to ensure fault tolerance is to repeat the measurement of the checks of the deformed code, i.e.~$A_v$, $B_p$, $\tilde s_i$, at least $d$ times, where $d$ is the code distance of the original code. Intuitively, this provides redundant, and hopefully independent, measurements of logical $L$ since the set of $A_v$ measurements at any time $t=1,2,\dots,d$, denoted $\{A^{(t)}_v\}$, would ideally satisfy $L=L^{(t)}:=\prod_v A^{(t)}_v$. If the $L^{(t)}$ values are not constant in time, however, we know an error occurred and can, in some cases, correct it.

The complete proof of Theorem~\ref{thm:fault-tolerance} is involved, so we leave the details to the spacetime code and fault distance section of the supplementary information~\cite{supplement} and only summarize the main ideas here. This starts with some basic definitions (similar to Ref.~\cite{McEwen2023}).

\begin{definition}[Space and time faults]
    A space-fault is a Pauli error operator that occurs on some qubit during the implementation of the gauging measurement.
    A time-fault is a measurement error where the result of a measurement is reported incorrectly during the implementation of the gauging measurement.
    A general spacetime fault is a collection of space and time faults. 
\end{definition}

For the purposes of our discussion, state initializations can be implemented via single qubit $Z$ measurement followed by a conditional application of $X$ to prepare the $\ket{0}$ state. 
State initialization faults (occurring for example when and edge qubit is initialized in $\ket{1}$ instead of $\ket{0}$ due to an error) can be considered space-faults by decomposing them into a perfect initialization followed by a Pauli error.

\begin{definition}[Detectors]
    A detector is a collection of state initializations and measurements that ideally yield a deterministic result, in the sense that the product of the observed measurement results must be $+1$ independent of the individual measurement outcomes, if there are no faults in the procedure.
\end{definition}

The set of detectors defines a \emph{spacetime code}. We enumerate the detectors in gauging measurement in supplementary Lemma~1. Generally, detectors are formed from subsequent measurements of the same check operator. This changes slightly at the first round of measurement in the deformed code, where instead the measurements of $s_i$ and $\tilde s_i$ together form a detector, and upon the return to the original code when the edge qubits are measured out, where instead measurements of $s_i$, $\tilde s_i$, and of the edge qubits in $\mu_i$ altogether form a detector.

\begin{definition}[Syndrome]
    The syndrome caused by a spacetime fault is defined to be the set of detectors that are violated in the presence of the fault. 
    That is, the set of detectors that do not satisfy the constraint that the observed measurement results multiply to $+1$ in the presence of the fault. 
\end{definition}

With these definitions in hand, it is natural to define a \emph{spacetime logical fault} to be a collection of space and time faults that violates no detectors, or in other words, has a trivial syndrome. Similarly, by propagating the errors through the procedure, we can determine if a spacetime logical fault affects the result of the gauging measurement procedure, either by flipping the logical measurement result, or by causing a logical operator other than $L$ to be applied. If the spacetime logical fault does not affect the logical outcome of the procedure, we say it is a \emph{spacetime stabilizer}, and otherwise that it is a nontrivial spacetime logical fault. The minimum number of faults in a nontrivial spacetime logical fault is called the \emph{spacetime fault-distance}.

In the spacetime code and fault distance section of the supplementary information~\cite{supplement}, supplementary Theorem~1 shows that the spacetime fault-distance of gauging measurement is $d$ provided that a graph $G$ satisfying desideratum 3 in Theorem~\ref{thm:GDesiderata} is used, and that at least $d$ rounds of syndrome measurement in the deformed code are performed. Interestingly, we can show this by separately showing (1) that it takes at least $d$ measurement faults to cause a nontrivial spacetime logical fault (i.e.~the \emph{time fault-distance} is at least $d$, see supplementary Lemma~3) and (2) that the distance of the deformed code, or what we called the \emph{space fault-distance} in Lemma~\ref{lem:spacedistance}, is also at least $d$. Effectively, these two failure mechanisms, time-like or space-like faults, can be cleanly separated (see supplementary Lemma~4) by multiplying an arbitrary spacetime logical fault by spacetime stabilizers, the structure of which we also explicitly describe in supplementary Lemma~2.

\section*{Data Availability}
No data was generated during this work. 

\clearpage 
\onecolumngrid

\begin{center}
    \noindent\textbf{\Large{Supplementary information for ``Low-overhead fault-tolerant quantum computation by gauging logical operators''}}
\end{center}

\section*{BB code examples} 
To demonstrate the utility of gauging measurement we provide efficient gauging measurements for a pair of bivariate bicycle (BB) codes introduced in Ref.~\cite{bravyi2024high}. Let $I_r$ be the $r\times r$ identity matrix and $C_r$ be the cyclic permutation matrix of $r$ items, i.e.~$\langle i|C_r=\langle i+1\mod r|$ for all $i$. Define $x=C_{\ell}\otimes I_m$ and $y=I_{\ell}\otimes C_m$. Note $x^\ell=y^m=I_{\ell m}$ and $x^\top x=y^\top y=I_{\ell m}$.

BB codes are CSS codes using $n=2\ell m$ physical qubits divided into two sets, $\ell m$ left ($L$) qubits and $\ell m$ right ($R$) qubits. The parity check matrices of a BB code are
\begin{equation}
H_X=[A|B],\quad H_Z=[B^\top|A^\top],
\end{equation}
where $A,B\in\mathbb{F}_2[x,y]$ are polynomials in the matrices $x$ and $y$. Also, $A^\top=A(x,y)^\top=A(x^\top,y^\top)=A(x^{-1},y^{-1})$ and likewise for $B^\top$. There are $\ell m$ $X$ checks and $\ell m$ $Z$ checks in this generating set, though only $(n-k)/2$ checks of either type are independent for a code encoding $k$ qubits.

The $X$ checks, $Z$ checks, $L$ qubits, and $R$ qubits are each in 1-1 correspondence with the $\ell m$ elements of
\begin{equation}
\mathcal{M}=\{x^ay^b:a,b\in\mathbb{Z}\}.
\end{equation}
To label individual checks or qubits, we write $(\alpha,T)$ for $\alpha\in\mathcal{M}$ and $T\in\{X,Z,L,R\}$. A set of checks or qubits is more generally indicated by $(p,T)$ for a polynomial ${p\in\mathbb{F}_2[x,y]}$. For instance, by definition of the code, the $X$ check labeled $(\alpha,X)$ acts on qubits $(\alpha A,L)$ and $(\alpha B,R)$ while $Z$ check labeled $(\beta,Z)$ acts on qubits $(\beta B^\top,L)$ and $(\beta A^\top,R)$. 

An $X$-type (resp.~$Z$-type) Pauli acting on left qubits $(p,L)$ and right qubits $(q,R)$ for polynomials $p,q$ is more succinctly written $X(p,q)$ (resp.~$Z(p,q)$). For example, the checks described previously are written $X(\alpha A,\alpha B)$ and $Z(\beta B^\top,\beta A^\top)$.

\subsection*{Gross code} 
The gross code is a $\llbracket144,12,12\rrbracket$ BB code, and is so-called because a ``gross" is a dozen dozens or 144. It is obtained from the BB construction by choosing $\ell=12$, $m=6$, and
\begin{align}
A=x^3+y^2+y,\quad B=y^3+x^2+x.
\end{align}
We use $A_i$ and $B_i$ for $i=1,2,3$ to denote the individual monomial terms in these polynomials.

A convenient basis of logical operators for the gross code was provided in Ref.~\cite{bravyi2024high}. This basis is described using polynomials
\begin{align}
f=&1+x+x^2+x^3+x^6+x^7+x^8+x^9\\\nonumber
&+(x+x^5+x^7+x^{11})y^3,\\\nonumber
g=&x+x^2y+(1+x)y^2+x^2y^3+y^4,\\\nonumber
h=&1+(1+x)y+y^2+(1+x)y^3.
\end{align}
Then $\overline{X}_{\alpha}=X(\alpha f,0)$ and $\overline{X}'_{\beta}=X(\beta g,\beta h)$ are logical operators for all choices of monomials $\alpha,\beta\in\mathcal{M}$. The symmetry in the BB code construction implies ${\overline{Z}_{\beta}=Z(\beta h^\top,\beta g^\top)}$ and $\overline{Z}'_{\alpha}=Z(0,\alpha f^\top)$ are also logical operators. This symmetry also means that if we construct gauging measurements for $\overline{X}_\alpha$ and $\overline{X}'_\beta$, the same Tanner graph connectivity works for $\overline{Z}'_\alpha$ and $\overline{Z}_\beta$, a fact that was used previously to reduce the number of ancilla systems required to measure a complete logical basis \cite{cross2024linear}.

It was noted before \cite{cross2024linear,Cowtan2024} that, while $\overline{X}'_\beta$ can be measured by mono-layer versions of the CKBB method~\cite{cohen2022low}, $\overline{X}_\alpha$ requires several layers. This can be understood as the Tanner subgraph supported on $\overline{X}_\alpha$ lacking sufficient expansion \cite{cross2024linear}. 

The flexibility of gauging measurement allows us to measure $\overline{X}_\alpha$ with fewer additional qubits and checks than existing constructions. Our goal is to measure $\overline{X}_\alpha$ while not introducing any new checks or qubits with Tanner graph degree more than $6$ (note the 12 qubits in $\overline{X}_\alpha$ and the 18 adjacent $Z$ checks will necessarily become degree $7$). It happens that we can achieve this goal without decongestion or cellulation, i.e.~Fig.~1.~(a) in the main text is sufficient. To specify the construction, we describe $G,M,N$.

The vertices of graph $G$ are in 1-1 correspondence with the qubits $\{(\gamma,L):\gamma\in f\}$, where $\gamma\in f$ means monomial $\gamma$ is a term in polynomial $f$. To ensure that each row of the matrix $M$ (see Fig.~1.~(a) in the main text) has weight $1$, we connect two vertices $\gamma,\delta\in f$ of $G$ if qubits $(\gamma,L)$ and $(\delta,L)$ participate in the same $Z$ check. This is the case if and only if $\gamma=B_i^\top B_j\delta$ for some $i,j\in\{1,2,3\}$. From this step, $G$ acquires $18$ edges and is the same as it would be in the CKBB method \cite{cohen2022low}.

Additional edges can now be added to $G$ to increase its expansion. This is done to ensure the deformed code has code distance equal to the original code, here distance 12. Constructing a graph with large Cheeger constant is sufficient but not necessary for this objective, so we instead randomly add edges to $G$. It is typically fast to eliminate random trials with low code distances using upper bounds provided by the BP+OSD decoder (as described in Ref.~\cite{bravyi2024high}). If a trial passes the BP+OSD test, then the deformed code distance can be proven to be 12 exactly with integer programming \cite{cplex2022v22}.

We find that four additional edges are sufficient to make a deformed code with distance 12. With vertices labeled by monomials from $f$, one choice of additional edges is
\begin{equation}
\begin{array}{cccc}
(x^2,x^5y^3),& (x^2,x^6),& (x^5y^3,x^{11}y^3),& (x^7y^3,x^{11}y^3).
\end{array}
\end{equation}

As a connected graph with $12$ vertices and $22$ edges, $G$ has a minimal cycle basis consisting of $22-12+1=11$ cycles. Gaussian elimination can be used to find such a basis as the row nullspace of $G^\top$, i.e.~find full-rank $N$ with $11$ rows such that $NG^\top=0$. The rows of $N$ represent $Z$ checks, the flux operators $\mathcal{B}$. However, because the $Z$ checks of a BB code are not all independent, these $11$ flux operators also may not all be independent. Suppose $S$ is the sub-matrix of $H_Z$ containing just the rows describing $Z$ checks in $\mathcal{S}$ (i.e.~checks overlapping $\overline{X}_\alpha$) and $C$ is the matrix describing the rest of the $Z$ checks. If $uS+vC=0$ for some vectors $u,v$, then $u,v$ identify a product of checks from sets $\mathcal{S}$ and $\mathcal{C}$ in the deformed code that is not supported on the original code and so must represent a cycle $uM$ supported on qubits $\mathcal{E}$ (possibly an empty cycle if $u=0$). Let $U=\{u:\exists v,uS+vC=0\}$. The number of redundant cycles in any cycle basis is then
\begin{equation}
\dim U=\mathrm{row\_nullity}(H_Z)-\mathrm{row\_nullity}(C),
\end{equation}
which evaluates to $4$ for the case at hand. Note this calculation would change if multiple logical operators were undergoing measurement simultaneously.

We therefore just need $11-4=7$ $Z$ checks in $\mathcal{B}$. We describe these checks as cycles on the vertices of $G$.
\begin{align}\nonumber
&x^9\rightarrow x^7y^3\rightarrow x^8\rightarrow x^9,\\\nonumber
&x^9\rightarrow x^{11}y^3\rightarrow x^7y^3\rightarrow x^9,\\\nonumber
&x^6\rightarrow x^7\rightarrow x^5y^3\rightarrow x^6,\\\nonumber
&x^6\rightarrow x^2\rightarrow x^5y^3\rightarrow x^6,\\\nonumber
&x^3\rightarrow x^2\rightarrow x^5y^3\rightarrow x^3,\\\nonumber
&x^6\rightarrow x^7\rightarrow x^8\rightarrow x^7y^3\rightarrow x^6,\\
&x\rightarrow x^2\rightarrow x^5y^3\rightarrow x^{11}y^3\rightarrow x.
\end{align}
This completes the description of the gauging measurement of $\overline{X}_\alpha$ in the gross code.

To summarize, Table~\ref{tab:gross_check_qubit_degrees} lists check weights and qubit degrees in the deformed code. Beyond the checks and qubits of the original gross code, the deformation adds $12$ $X$ checks in $\mathcal{A}$, $7$ $Z$ checks in $\mathcal{B}$, and $22$ qubits in $\mathcal{E}$. The additional checks and qubits total $41$.

\begin{table}[t]
    \centering
    \begin{tabular}{|c|c|}
        \hline
        \multicolumn{2}{|c|}{$X$ checks} \\\hline
        weight & count \\\hline\hline
        4 & 7 \\
        5 & 2 \\
        6 & 75 \\\hline
        \multicolumn{2}{c}{}\\
        \multicolumn{2}{c}{}
    \end{tabular}
    \begin{tabular}{|c|c|}
        \hline
        \multicolumn{2}{|c|}{$Z$ checks}\\\hline
        weight & count \\\hline\hline
        3 & 5 \\
        4 & 2 \\
        6 & 54 \\
        7 & 18 \\\hline
        \multicolumn{2}{c}{}
    \end{tabular}
    \begin{tabular}{|c|c|}
        \hline
        \multicolumn{2}{|c|}{qubits} \\\hline
        degree & count \\\hline\hline
        3 & 8 \\
        4 & 9 \\
        5 & 5 \\
        6 & 132 \\
        7 & 12 \\\hline
    \end{tabular}
    \caption{Check weights and qubit degrees of the deformed code for measuring $\overline{X}_{\alpha}$ in the gross code.}
    \label{tab:gross_check_qubit_degrees}
\end{table}

To demonstrate the versatility of gauging measurement, we provide another BB code example below. This is the $\llbracket288,12,18\rrbracket$ code from Ref.~\cite{bravyi2024high} that we refer to as the two-gross code.

\subsection*{Two-gross code}

\textit{Two-gross code.} Twice as large as the gross code but with larger code distance, the $\llbracket 288,12,18\rrbracket$ two-gross code is obtained from the BB code construction by taking $\ell,m=12,12$, and
\begin{equation}
A=x^3+y^7+y^2,\quad B=y^3+x^2+x.
\end{equation}
One set of logical operators for the two-gross code are the weight-18 operators $\overline{X}_\alpha=X(\alpha f,0)$ for
\begin{align}\nonumber
f=&1+x+x^2+x^7+x^8+x^9+x^{10}+x^{11}\\\nonumber
&+(1+x^6+x^8+x^{10})y^3\\
&+(x^5+x^6+x^9+x^{10})y^6+(x^4+x^8)y^9
\end{align}
and all $\alpha\in\mathcal{M}$. 

In order to measure $\overline{X}_\alpha=X(\alpha f,0)$ with gauging measurement, we use the construction of Fig.~1.~(a) in the main text and construct a graph $G$ with suitable properties. Just as for the gross code, each vertex $\gamma\in\mathcal{M}$ of the graph corresponds to a monomial term in $f$. An edge $(\gamma,\delta)$ is added if $\gamma=B_i^\top B_j\delta$ for some monomials $B_i,B_j\in B$ to ensure a sparse matching matrix $M$. A total of $27$ edges are added this way.

We find we can also add an additional 7 edges to ensure the deformed code has distance $18$. Two of these edges connect the same two vertices, so the resulting graph is a multi-graph.
\begin{align}
\begin{array}{lll}
(x^{4}y^{9},x^{9}y^{6}),&(y^{3},x^{11}),&(x^{7},x^{10}y^{6}),\\
(x^{8}y^{3},x^{10}y^{6}),&(1,x^{8}),&(x^{2},x^{6}y^{3})\text{\space twice}.
\end{array}
\end{align}
We can then find a basis of $13$ cycles to complete the construction.

\begin{align}\nonumber
&x^{2}\rightarrow x^{4}y^{9}\rightarrow x^{9}y^{6}\rightarrow x^{10}y^{6}\rightarrow x^{7}\rightarrow x^{6}y^{3}\rightarrow x^{2}\\\nonumber 
&x^{4}y^{9}\rightarrow x^{6}y^{6}\rightarrow x^{8}y^{3}\rightarrow x^{10}y^{6}\rightarrow x^{8}y^{9}\rightarrow x^{9}y^{6}\rightarrow x^{4}y^{9}\\\nonumber 
&1\rightarrow x\rightarrow y^{3}\rightarrow x^{10}y^{6}\rightarrow x^{7}\rightarrow x^{8}\rightarrow 1\\\nonumber 
&1\rightarrow x\rightarrow x^{2}\rightarrow x^{6}y^{3}\rightarrow x^{8}\rightarrow 1\\\nonumber 
&1\rightarrow x\rightarrow x^{2}\rightarrow y^{3}\rightarrow x^{11}\rightarrow 1\\\nonumber 
&1\rightarrow x^{8}\rightarrow x^{9}\rightarrow x^{10}\rightarrow x^{11}\rightarrow 1\\\nonumber 
&x^{2}\rightarrow x^{6}y^{3}\rightarrow x^{5}y^{6}\rightarrow x^{4}y^{9}\rightarrow x^{2}\\\nonumber 
&x^{4}y^{9}\rightarrow x^{5}y^{6}\rightarrow x^{6}y^{6}\rightarrow x^{4}y^{9}\\\nonumber 
&x^{7}\rightarrow x^{8}\rightarrow x^{6}y^{3}\rightarrow x^{7}\\\nonumber 
&x^{9}\rightarrow x^{10}\rightarrow x^{8}y^{3}\rightarrow x^{9}\\\nonumber 
&1\rightarrow x^{11}\rightarrow x^{10}y^{3}\rightarrow 1\\\nonumber 
&x^{7}\rightarrow x^{8}y^{9}\rightarrow x^{10}y^{6}\rightarrow x^{7}\\ 
&x^{2}\rightarrow x^{6}y^{3}\rightarrow x^{2} 
\end{align}

Check weights and qubit degrees for this two-gross code measurement are summarized in Table~\ref{tab:doublEross_check_qubit_degrees}. As for the gross code, we find a deformed code with the minimal number of weight 7 checks and degree 7 qubits while requiring no higher connectivity. The deformation adds $18$ $X$ checks, $13$ $Z$ checks, and $34$ qubits, totaling $65$.

\begin{table}[t]
    \centering
    \begin{tabular}{|c|c|}
        \hline
        \multicolumn{2}{|c|}{$X$ checks} \\\hline
        weight & count \\\hline\hline
        4 & 7 \\
        5 & 8 \\
        6 & 147 \\\hline
        \multicolumn{2}{c}{}\\
        \multicolumn{2}{c}{}\\
        \multicolumn{2}{c}{}
    \end{tabular}
    \begin{tabular}{|c|c|}
        \hline
        \multicolumn{2}{|c|}{$Z$ checks}\\\hline
        weight & count \\\hline\hline
        2 & 1 \\
        3 & 5 \\
        4 & 1 \\
        5 & 3 \\
        6 & 120\\
        7 & 27 \\\hline
    \end{tabular}
    \begin{tabular}{|c|c|}
        \hline
        \multicolumn{2}{|c|}{qubits} \\\hline
        degree & count \\\hline\hline
        3 & 3 \\
        4 & 17 \\
        5 & 12 \\
        6 & 272 \\
        7 & 18 \\\hline
        \multicolumn{2}{c}{}
    \end{tabular}
    \caption{Check weights and qubit degrees of the deformed code for measuring $\overline{X}_{\alpha}$ in the two-gross code.}
    \label{tab:doublEross_check_qubit_degrees}
\end{table}

\section*{Recovering existing protocols from the gauging measurement framework}
In this section we discuss how several existing schemes for logical measurement are related to gauging measurements. 

Lattice surgery is a widely used scheme for logical measurements on surface codes~\cite{horsman2012surface}. 
The gauging measurement can be interpreted as a direct generalization of lattice surgery as it recovers conventional lattice surgery when applied to copies of the surface code with an appropriate choice of $G$. 
For example, consider the logical operator $\overline{X}_1\otimes \overline{X}_2$ supported on the right and left edge of a pair of equally sized surface code blocks, respectively. 
Applying the gauging measurement with a choice of graph that is a ladder joining the edge qubits of the surface codes as shown in Fig.~1.~(b) in the main text results in a deformed code that is again the surface code on the union of the two patches. 
The final step of measuring out individual edges is the same as conventional lattice surgery.
To implement a lattice surgery between surface codes that are not directly adjacent to one another, we can apply the gauging procedure with a graph that includes a grid of dummy vertices between the two edges.

Interestingly, this procedure can be extended directly to measure any pair of matching logical $X$ operators on a pair of code blocks. 
The logical $X$ operators are further required to each have the same choice of graph $G$ which is allowed to have low expansion $h(G)<1$ but is required to satisfy the remaining two desiderata in Theorem~2 from the main text for each code block. 
Additional \textit{bridge} edges are then added between vertices in the two copies of $G$, generalizing the choice shown in Fig.~1.~(b) in the main text. 
Similar to lattice surgery, the gauging measurement defined by such a choice of graph preserves the code distance when the individual logical $X$ operators have minimal weight and contain no sublogical operators. 
We remark that this procedure is similar to the joint logical measurement constructions in Refs.~\cite{Cowtan2023,Cowtan2024,cross2024linear,Swaroop2024}. 
This discussion demonstrates that the expansion condition we have used in this work is overkill in some settings. 
It appears that expansion is only required for subsets of qubits that are relevant to the logical operators of the codes being measured. 
Developing a better understanding of the necessary expansion conditions is an interesting direction that might lead to more efficient logical measurement protocols.

Shor-style logical measurement~\cite{shor1996fault} involves entangling an auxiliary GHZ state to a code block via $CX$ gates that are applied transversally between the auxiliary qubits and the support of an $X$ logical. 
The logical is then measured by measuring $X$ on each of the auxiliary qubits and discarding them. 
We can perform a similar gauging measurement using a graph that has a separate dummy vertex connected by an edge to each qubit in the support of $L$, and then a connected graph on the dummy vertices. 
If we consider performing the gauging measurement where the edges of the connected graph on the dummy qubits are measured first, we are left with a state that corresponds to a GHZ state entangled with the support of $L$. This is similar to the Shor-style measurement where the final $X$ measurements have been commuted backwards through the $CX$ gates. 

A generalized version of the gauging measurement allows one to replace the auxiliary graph with a hypergraph whose adjacency matrix has the same kernel, see Remark~\ref{rem:Hypergraph}. 
This is sufficient to capture the Cohen et al.~scheme for logical measurement~\cite{cohen2022low}. 
Consider the restriction of the $Z$-type checks to the support of an irreducible $X$ logical as was done in Ref.~\cite{cohen2022low}. 
This defines a hypergraph of $Z$ constraints with the only nontrivial element in the kernel being the logical operator that is to be measured. 
Next, we add $d$ layers of dummy vertices for each qubit in the support of $L$, connect the $d$ copies of each vertex via a line graph, and join the vertices in each layer via a copy of the same underlying hypergraph. 
Applying the generalized gauging procedure to this hypergraph exactly reproduces the Cohen et al.~measurement scheme. 
The logical measurement scheme from Ref.~\cite{cross2024linear} can similarly be recovered by using fewer than $d$ layers of dummy vertices above.  
The procedures in Refs.~\cite{cohen2022low,cross2024linear} for joining ancilla systems designed for irreducible logicals to measure their products can also be captured as a gauging measurement by adding edges between the graphs corresponding to the individual ancilla systems. 
 
Generalizing the gauging measurement to a hypergraph allows one to measure many logical operators simultaneously, see Remark~\ref{rem:Hypergraph}. 
This approach can be used to implement the standard initialization of a CSS code by preparing $\ket{0}^{\otimes n}$ and measuring the $X$-type checks. 
This is achieved by starting with a trivial code with a dummy vertex for each $X$-type check of the CSS code and then performing the generalized gauging measurement using the hypergraph corresponding to the $Z$-type checks of the CSS code. 
Previous work on gauging quantum codes focused on this type of initialization measurement~\cite{Williamson2020a,Tantivasadakarn2021,Tantivasadakarn2022,Williamson2016,Vijay2016,kubica2018ungauging,Dolev2021,Rakovszky2023}. 
In this case the ungauging step simply performs a read-out measurement of $Z$ on all qubits. 

This state preparation and read-out gauging measurement procedure can be combined with another gauging measurement to implement a Steane-style measurement of a stabilizer group~\cite{Steane1996Active}.
This can be achieved by first performing state preparation of an ancilla code block via gauging as described above. 
This is followed by a gauging measurement of $XX$ on pairs of matching qubits between the data code block and the ancilla code block. 
Finally, the ungauging step is performed to read-out $Z$ on all ancilla qubits.

\section*{Generalizations of the gauging measurement procedure}

The gauging measurement procedure is vastly generalizable.
It can be applied to any representation of a finite group by operators that have a tensor product factorization. 
This allows vertex checks to connect to several qubits of the original code at once. 
With this approach, one can measure a weight $d$ logical operator using an auxiliary graph with just $d/c$ vertices instead of $d$ by connecting each vertex to a different set of $c$ qubits in the logical support. For instance, the surface code lattice surgery in Fig.~1.~(b) in the main text could then be done in the more standard way with one new column and $c=2$. Generally, the graph desiderata need to also be modified to accommodate this change. In particular, one may need more expansion, $h(G)\ge c$.  

The above generalization accommodates non-Pauli operators, whose measurement can produce magic states. 
An example of this is the measurement of Clifford operators in a topological code, see Ref.~\cite{Davydova2024}. 
The generalization extends to qudit systems and nonabelian groups~\cite{Cowtan2022Algebraic}. 
However, for nonabelian groups a definite global charge is not fixed by measuring the charge locally.  
We leave further development of this direction to future work. 
In fact, the representation to be measured need not form the logical operators of a quantum error-correcting code at all, raising the potential of more general fault-tolerant code deformations.

We now address generalizations that address the parallelism and time overhead of the gauging measurement. 
The gauging measurement procedure can be applied directly to an arbitrary number of logical operators in parallel, provided that no pair of these logical operators act on a common qubit via different nontrivial Pauli operators. 
To maintain an LDPC code during the code deformation step it is required that only a constant number of logical operators being measured share support on any single qubit. 
For codes that support many disjoint logical representatives, this offers the potential of performing highly parallelized logical gates. 
It is also possible to trade-off time overhead for space overhead by performing $2m-1$ measurements of equivalent logical operators in parallel for $\frac{d}{m}$ rounds and then taking a majority vote to determine the classical outcome of the logical measurement. 
For codes that do not support many disjoint logical representatives, techniques have been developed to fault-tolerantly attach ancilla systems that increase the number of disjoint logical representatives~\cite{Zhang2024}. 
The gauging measurement procedure can then be applied in parallel to the modified system, see Ref.~\cite{cowtan2025parallel}.

\begin{remark}[Replacing $G$ with a hypergraph]
\label{rem:Hypergraph}
    The gauging measurement procedure can also be generalized to measure a number of operators simultaneously. 
    This procedure can be applied to measure any abelian group of operators that are describable as the $X$-type operators that commute with an auxiliary set of $k$-local $Z$-type checks. 
    This type of group can equivalently be formulated as the kernel of a sparse linear map over $\mathbb{F}_2$ using the stabilizer formalism~\cite{gottesman1997stabilizer} (more generally one has $\mathbb{F}_p$ for $p$ a prime).
    For qubits this is equivalent to replacing $G$ with a hypergraph. 
    The generalized gauging procedure performs a code deformation by introducing a qubit for each hyperedge and measuring into new $A_v$ checks given by the product of $X$ on a vertex and the adjacent hyperedges. 
\end{remark}

We now turn to alternative implementations of the fault-tolerant gauging measurement procedure. 
The scaling of the fault-distance established in Theorem~4 in the main text holds even if the $B_p$ checks are measured much less often than the $A_v$ and $\tilde{s}_i$ checks. 
In fact, they never need to be measured directly as they can be inferred from the initialization and readout steps of the code deformation. 
While this is appealing for cases where the $B_p$ checks have high weight, it results in large detector cells and hence the code is not expected to have a threshold without further modifications. 
Still, this strategy may prove useful in practice for small instances. 

Finally, we point out that the gauging procedure can alternatively be implemented by a circuit by implementing the $A_v$ measurements as follows. 
After initializing the edge qubits we perform the entangling circuit $\prod_v \prod_{e\ni v} CX_{v\rightarrow e}$. 
Next, we projectively measure $X_v$ on all vertices in $G$ and keep the post-measurement state. 
We then repeat the same entangling circuit, followed by a measurement of the edge operators $Z_e$ where the edge qubits are discarded. 
The circuit implementation leads to a different, but closely related, fault-tolerant implementation where the vertex qubits are decoupled and can be discarded during the code deformation. 
The main difference is that this can lead to a reduction in the distance by a constant multiple. 
However, this distance reduction can be avoided by adding an extra dummy vertex to divide each edge into a pair of edges. \newline

\section{Spacetime code and spacetime fault-distance}

In this section we provide further details about the fault-tolerant gauging measurement procedure. 
We follow the general approach to fault tolerance via repeated measurements introduced in Ref.~\cite{dennis2002topological}, and borrow some terminology from Ref.~\cite{McEwen2023}. 
An alternative perspective through the lens of gauge fixing can be found in Refs.~\cite{vuillot2019code} and \cite{cross2024linear}.
The procedure begins at time $t_0$ followed by at least $d$ rounds of syndrome measurements in the original $[[n,k,d]]$ code. 
Next, there is a code deformation step at time $t_i$ followed by at least $d$ rounds of syndrome measurements in the deformed code. 
Finally, there is another code deformation step at time $t_o$ back to the original code, followed by at least $d$ rounds of syndrome measurements in the original code. 

We use a convention for labelling time steps that has check measurements occurring with an offset of half from an integer. 
With this convention detectors and space errors are associated to integer time steps, while measurement errors are associated to integer + half time steps. 
The initial gauging code deformation measurements are made at time step $t_i+\frac{1}{2}$ and the final ungauging code deformation measurements are made at time step $t_o+\frac{1}{2}$. 

\begin{lemma}[Spacetime code]\label{lem:spacetime-code}
    The following form a generating set of the local detectors in the fault-tolerant gauging measurement procedure:\\
    For $t<t_i$ and $t>t_o$
    \begin{itemize}
        \item $s_j^{t}$ which is given by the collection of repeated $s_j$ checks in the original code at times $t-\frac{1}{2},t+\frac{1}{2}$. 
    \end{itemize}
    For $t_i<t<t_o$
    \begin{itemize}
        \item $A_v^{t}$ which is defined by the repeated measurement of the $A_v$ check in the deformed code at times ${t-\frac{1}{2},}{t+\frac{1}{2}}$.
        \item $B_p^t$ which is defined similarly for the $B_p$ check at times $t-\frac{1}{2},t+\frac{1}{2}$. 
        \item $\tilde{s}_j^{t}$ which is defined similarly for the deformed check $\tilde{s}_j$ at times $t-\frac{1}{2},t+\frac{1}{2}$. 
    \end{itemize}
    For $t=t_i$
    \begin{itemize}
        \item $B_p^t$ which is given by the measurement of $B_p$ at time $t_i+\frac{1}{2}$ together with the initialization of the edge qubits $e\in p$ in the $\ket{0}_e$ state at time $t_i-\frac{1}{2}$. 
        \item $\tilde{s}_j^{t_i}$ which is given by the measurement of ${s}_j$, and the initialization of the edge qubits $e\in \gamma$ in the $\ket{0}_e$ state, at time $t_i-\frac{1}{2}$, and the measurement of $\tilde{s}_j$ at time $t_i+\frac{1}{2}$
    \end{itemize}
    For $t=t_o$
    \begin{itemize}
        \item $B_p^t$ which is given by the measurement of $B_p$ at time $t_o-\frac{1}{2}$ and the measurement of $Z_e$ on the edge qubits $e\in p$ at time $t_o+\frac{1}{2}$
        \item $\tilde{s}_j^{t_o}$ which is given by the measurement of $\tilde{s}_j$ at time $t_o-\frac{1}{2}$, together with the measurement of $Z_e$ on the edge qubits $e\in \gamma$, and $s_j$ at time $t_o+\frac{1}{2}$
    \end{itemize}
\end{lemma}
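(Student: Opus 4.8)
The plan is to prove the lemma in two directions: that every operator listed is a local detector, and that these generate all local detectors. For the first direction I would check, case by case, that the product of the listed measurement outcomes and initializations is forced to $+1$ when there are no faults. Inside a stable region ($t<t_i$, $t_i<t<t_o$, or $t>t_o$) this is the standard repeated-measurement argument: the code is unchanged between rounds $t-1/2$ and $t+1/2$, so the instantaneous stabilizer group (ISG) is constant and any stabilizer generator has equal values in the two rounds, giving $s_j^t$, $A_v^t$, $B_p^t$, and $\tilde s_j^t$. At the transition $t=t_i$ I would use that the edge qubits are freshly initialized in $\ket{0}_e$, hence are $+1$-eigenstates of $Z_e$; therefore $B_p=\prod_{e\in p}Z_e$ equals $+1$ right after initialization, and since every measured $A_v$ commutes with $B_p$ (each vertex meets a cycle $p$ in an even number of edges, i.e.\ $\partial p=0$), this value is undisturbed until the first $B_p$ measurement, yielding $B_p^{t_i}$. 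Likewise $\tilde s_j=s_j\prod_{e\in\gamma}Z_e$ equals the $s_j$-outcome observed at $t_i-1/2$ immediately after initialization, and $\tilde s_j$ commutes with every $A_v$ by construction ($\partial\gamma=\mathcal S_Z$), so this value persists to the first $\tilde s_j$ measurement, yielding $\tilde s_j^{t_i}$. The transition $t=t_o$ is the time-reverse: $B_p$ commutes with the destructive $Z_e$ measurements, so its value at $t_o-1/2$ equals the product over $e\in p$ of the $Z_e$ outcomes at $t_o+1/2$; and $\tilde s_j=s_j\prod_{e\in\gamma}Z_e$ commutes with those $Z_e$ and with $s_j$, so the $\tilde s_j$ outcome at $t_o-1/2$ equals the product of the $s_j$ outcome and the $Z_e$ outcomes at $t_o+1/2$. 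Each generator is local because it involves at most two rounds of one bounded-weight check of the original or deformed code, the latter being LDPC for a suitable choice of $G$ (Remark~\ref{rem:WorstCaseG}).

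For the second direction I would argue via the ISG along the circuit. Within each stable region, the detectors supported entirely in that region are generated by the consecutive-round comparisons already listed — the standard fact for repeated measurement of a fixed stabilizer code — so it remains to show that every detector is a product of these region generators together with the boundary generators $B_p^{t_i}$, $\tilde s_j^{t_i}$, $B_p^{t_o}$, $\tilde s_j^{t_o}$. I would establish this by tracking which stabilizers survive each code-deformation step. At $t_i$ the ISG passes from $\langle s_j, Z_e\rangle$ to the deformed-code group $\langle A_v, B_p, \tilde s_j\rangle$ (the Deformed code lemma); an element of the former survives the $A_v$ measurements exactly when it commutes with all $A_v$, which selects $\{s_j: j\in\mathcal C\}$, $\{\tilde s_j: j\in\mathcal S\}$, and the $B_p$, and the detector contributed at the transition by each survivor is precisely the corresponding listed generator (for $j\in\mathcal C$ this is the first $\tilde s_j^t$, since $\tilde s_j=s_j$ there). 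Crucially $\prod_v A_v=L$ is not in the pre-transition ISG, so the $A_v$ measurements contribute no ``born'' detector at $t_i$; this accounts for the single logical qubit removed by the measurement and explains the absence of any $A_v^{t_i}$ from the list. The step at $t_o$ is symmetric: the destructive $Z_e$ measurements kill every $A_v$ (these have $X$-support on edges) and keep the $B_p$ and $\tilde s_j$, reproducing $B_p^{t_o}$ and $\tilde s_j^{t_o}$. A dimension count over $\mathbb F_2$ — the number of measurements minus the rank of the spacetime check matrix, equivalently one detector per surviving stabilizer per round plus the two transition contributions — then confirms nothing has been omitted, so detectors straddling a transition (e.g.\ comparing $s_j$, $j\in\mathcal C$, across $t_i$) are automatically products of the listed generators.

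I expect the main obstacle to be the completeness argument at the two transitions: making the ISG tracking precise enough to conclude that the listed boundary detectors are a \emph{generating} set and not merely a sublist. The delicate points are (i) the ``one fewer detector'' bookkeeping caused by $\prod_v A_v=L$ being a fresh, random measurement, so that the detector count at $t_i$ comes out right; (ii) verifying that detectors not literally in the list — in particular ones built from measurements on both sides of a transition — decompose into products of the listed region and boundary generators; and (iii) checking that the freedom in the path choices $\gamma$, the cycle basis $p$, and the byproduct operator does not introduce new independent detectors. Each reduces to a rank computation for the spacetime check matrix over $\mathbb F_2$, which I would set up explicitly and verify by comparison with the count of independent measurements.
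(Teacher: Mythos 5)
Your proposal is correct and follows essentially the same route as the paper: both rest on the observations that, away from the deformation steps, any local detector decomposes into consecutive repeated-measurement comparisons of the same check, and that at $t_i$ and $t_o$ the only deterministic combinations involving edge-qubit initialization or readout are those commuting with every $A_v$, namely products of $B_p$ and $\tilde{s}_j$ (your note that $\prod_v A_v=L$ is not in the pre-transition group, so no $A_v^{t_i}$ detector exists, is the same bookkeeping the paper relies on implicitly). The paper phrases completeness as a direct decomposition of an arbitrary local detector rather than your ISG-tracking plus rank count, and it skips the forward verification you include; the one assumption you should state explicitly, as the paper does, is that the original code has no local relations (meta-checks), since otherwise single-time-step detectors exist and the ``standard fact'' you invoke for the stable regions fails.
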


\begin{proof}
    Away from the initial and final steps of the code deformation, the fault-tolerant gauging measurement procedure consists of repeatedly measuring the same checks. 
    In this setting any local detector must contain a pair of measurements of the same check. 
    This step assumes there are no local relations in the original code, see the remark below. 
    Any detector formed by the measurement of the same check at times $(t,t+\kappa),$ can be decomposed into detectors at time steps $(t,t+1)$, $(t+1,t+2),\dots,$ $(t+\kappa-1,t+\kappa)$.
    During the code deformation there are also detectors that involve the initialization of the edge qubits, and their final read-out. 
    Due to the measurement of $A_v$ terms during the code deformation, any detector involving the initialization of edge qubits in the $Z$ basis must include a collection of edge qubits that corresponds to a product of $B_p$ and $\tilde{s}_j$ checks. 
    A similar statement holds for detectors that involve the read-out of edge qubits in the $Z$ basis.
    Any such detector can be decomposed into one of the detectors involving edges at time $t_i$ or $t_o$ listed above, combined with repeated measurement checks. 
    Local detectors that cross the $t_i$ or $t_o$ deformation steps must involve a check $s_j$ from the original code at some time step before or after the deformation, along with a deformed version $\tilde{s_j}$. 
    Such a detector can be decomposed into repeated measurement detectors, and one of the detectors involving edge qubits introduced above. 
\end{proof}

\begin{remark}[Space-only detectors]
    When stating the above lemma, we have assumed there are no local detectors formed by collections of checks at a single time step. 
    Such detectors occur in codes with local relations, or meta-checks, and are useful for single-shot quantum error correction \cite{campbell2019theory}. 
    The fault-tolerant gauging measurement procedure can be applied to codes with local relations. 
    If such local relations are present, it is simple to modify the above lemma to include the space-only local detectors at each time step.
    However, we have chosen not to focus on such codes as the single-shot property of these code does not help reduce the time overhead of the gauging logical measurement when our current scheme is used.  
\end{remark}

\begin{remark}[Initial and final boundary conditions]
\label{rem:initialfinal}
    Following Ref.~\cite{Beverland2024}, we use the convention that the initial and final round of stabilizer measurements are perfect. 
    This is to facilitate a clean statement of our results and should not be taken literally. 
    The justification for why this convention does not fundamentally change our results is due to the $d$ rounds of error correction in the original code before and after the gauging measurement. 
    This ensures that any error process that involves both the gauging measurement and the initial or final boundary condition must have distance greater that $d$. 
    In practice the gauging measurement is intended to be one component of a larger fault-tolerant quantum computation which determines the appropriate realistic boundary conditions to use. 
\end{remark}

\begin{remark}[Spacetime syndromes]
    The syndrome of a fault is the set of detectors it causes to have result $-1$. 
    We say that the fault violates these detectors. 
    The faults can be organized according to the kind of syndrome they create:
    \\
    For $t < t_i$ and $t > t_o$
    \begin{itemize}
        \item A Pauli $X_v$ (or $Z_v$) operator fault at time $t$ violates the $s_j^t$ detectors for all $s_j$ checks that do not commute with $X_v$ (or $Z_v$). 
        \item An $s_j$-measurement fault at time $t+\frac{1}{2}$ violates the detectors $s_j^t$ and $s_j^{t+1}$.
    \end{itemize}
    For $t_i< t < t_o$ 
    \begin{itemize}
        \item A Pauli $X_v$ operator fault at time $t$ violates the $\tilde{s}_j^t$ detectors for all $\tilde{s}_j$ checks that do not commute with $X_v$.
        \item A Pauli $Z_v$ operator fault at time $t$ violates the $A_v^t$ detector and the $\tilde{s}_j^t$ detectors for all $\tilde{s}_j$ checks that do not commute with $Z_v$.
        \item A Pauli $X_e$ operator fault at time $t$ violates the $B_p^t$ detectors for all $p \ni e$ and the $\tilde{s}_j^t$ detectors for all $\tilde{s}_j$ checks that anticommute with $X_e$
        \item A Pauli $Z_e$ operator fault at time $t$ violates the $A_v$ detectors for $v \in e$. 
        \item A $\tilde{s}_j$-measurement fault at time $t+\frac{1}{2}$ violates the detectors $\tilde{s}_j^t$ and $\tilde{s}_j^{t+1}$.
        \item An $A_v$-measurement fault at time $t+\frac{1}{2}$ violates the detectors $A_v^t$ and $A_v^{t+1}$. 
        \item A $B_p$-measurement fault at time $t+\frac{1}{2}$ violates the detectors $B_p^t$ and $B_p^{t+1}$
    \end{itemize}
    For $t=t_i$
    \begin{itemize}
        \item A Pauli $X_v$ (or $Z_v$) operator fault at time $t$ violates the $\tilde{s}_j^{t}$ detectors for all $\tilde{s}_j$ checks that do not commute with $X_v$ (or $Z_v$)
        \item A Pauli $X_e$ operator fault at time $t$ violates the $B_p^t$ detectors for all $p \ni e$ and the $\tilde{s}_j^{t}$ detectors for all $\tilde{s}_j$ checks that anticommute with $X_e$
        \item A $\ket{0}_e$ initialization fault at time $t-\frac{1}{2}$ is equivalent to a Pauli $X_e$ operator fault at time $t$ and so violates the same detectors. 
        \item A $\tilde{s}_j$-measurement fault at time $t+\frac{1}{2}$ violates the detectors $\tilde{s}_j^t$ and $\tilde{s}_j^{t+1}$.
        \item An $A_v$-measurement fault at time $t+\frac{1}{2}$ violates the detector $A_v^{t+1}$. 
        \item A $B_p$-measurement fault at time $t+\frac{1}{2}$ violates the detectors $B_p^t$ and $B_p^{t+1}$
    \end{itemize}
    For $t=t_o$
    \begin{itemize}
        \item An $X_v$ (or $Z_v$) Pauli operator fault at time $t$ violates the $\tilde{s}_j^{t}$ detectors for all $\tilde{s}_j$ checks that do not commute with $X_v$ (or $Z_v$)
         \item A Pauli $X_e$ operator fault at time $t$ violates the $B_p^t$ detectors for all $p \ni e$ and the $\tilde{s}_j^{t}$ detectors for all $\tilde{s}_j$ checks that anticommute with $X_e$
        \item A $Z_e$-measurement read-out fault at time $t+\frac{1}{2}$ is equivalent to a Pauli $X_e$ fault at time $t$ and so violates the same detectors.
        \item An $s_j$-measurement fault at time $t+\frac{1}{2}$ violates the detectors $\tilde{s}_j^t$ and $s_j^{t+1}$.
        \item An $A_v$-measurement fault at time $t-\frac{1}{2}$ violates the detector $A_v^{t-1}$.
    \end{itemize}
\end{remark}

\begin{remark}[Syndrome mobility]
    For $t < t_i$ and $t > t_o$ the syndromes can be created and moved around the code by Pauli errors, and propagated forwards or backwards in time via measurement errors,  as usual. 
    For $t_i < t < t_o$ Pauli errors on vertex qubits behave similarly, with the exception that $Z_v$ errors cause additional $A_v$ syndromes. 
    Pauli $Z_e$ errors on edge qubits form strings that move the $A_v$ syndromes along edge-paths in the graph $G$. 
    Pauli $X_e$ errors on edge qubits produce $B_p$ syndromes and can also produce clusters of $\tilde{s}_j$ syndromes that cannot be generated by Pauli errors on the vertex qubits alone. 
    Again measurement errors can propagate syndromes forwards and backwards in time.
    At the gauging and ungauging time steps $t=t_i$ and $t=t_o$, respectively, $A_v$ syndromes can \textit{condense} that is be created or destroyed at the time slices where the $A_v$ stabilizer measurements start or end. 
    On the other hand, $B_p$ and $\tilde{s}_j$ errors can propagate through the gauging and ungauging time steps by mapping into an error caused by Pauli operators on the vertices alone up to multiplication with spacetime stabilizers including $A_v$ operators, see Lemma~\ref{lem:SpacetimeStabilizers}. 
\end{remark}

\begin{definition}[Spacetime logical fault]
    A spacetime logical fault is a collection of space and time faults that does not violate any detectors. 
\end{definition}

\begin{definition}[Spacetime stabilizer]
    A spacetime stabilizer is a trivial spacetime logical fault in the sense that it is a collection of space and time faults that does not violate any detectors \textit{and} does not affect the result of the gauging measurement procedure.
\end{definition}

\begin{lemma}
    \label{lem:SpacetimeStabilizers}
    The following form a generating set of local spacetime stabilizers:
    \\
    For $t<t_i$ and $t>t_o$
    \begin{itemize}
        \item A stabilizer check operator $s_j$ at time $t$. 
        \item A pair of Pauli $X_i$ (or $Z_i$) faults at times $t,t+1,$ together with measurement faults on all checks $s_j$ that do not commute with $X_i$ (or $Z_i$) at time $t+\frac{1}{2}$.
    \end{itemize}
    For $t_i<t<t_o$
    \begin{itemize}
        \item A stabilizer check operator $\tilde{s}_j$, $A_v$, or $B_p$, at time $t$.
        \item A pair of vertex Pauli $X_v$ faults at times $t,t+1,$ together with measurement faults on all checks $\tilde{s}_j$ that do not commute with $X_v$ at time $t+\frac{1}{2}$.
        \item A pair of vertex Pauli $Z_v$ faults at times $t,t+1,$ together with measurement faults on $A_v$ and all checks $\tilde{s}_j$ that do not commute with $Z_v$ at time $t+\frac{1}{2}$.
        \item A pair of edge Pauli $X_e$ faults at times $t,t+1,$ together with measurement faults on checks $B_p$ with $p\ni e$ and all checks $\tilde{s}_j$that do not commute with $X_e$ at time $t+\frac{1}{2}$.
        \item A pair of edge Pauli $Z_e$ faults at times $t,t+1,$ together with measurement faults on checks $A_v$ with $v\in e$ at time $t+\frac{1}{2}$.
    \end{itemize}
    For $t=t_i$
    \begin{itemize}
        \item A stabilizer check operator $s_j$ or $Z_e$ at time $t$.
        \item A pair of vertex Pauli $X_v$ faults at times $t,t+1,$ together with measurement faults on all checks $\tilde{s}_j$ that do not commute with $X_v$ at time $t+\frac{1}{2}$.
        \item A pair of vertex Pauli $Z_v$ faults at times $t,t+1,$ together with measurement faults on $A_v$ and all checks $\tilde{s}_j$ that do not commute with $Z_v$ at time $t+\frac{1}{2}$.
        \item A pair of edge Pauli $X_e$ faults at times $t,t+1,$ together with measurement faults on checks $B_p$ with $p\ni e$ and all checks $\tilde{s}_j$that do not commute with $X_e$ at time $t+\frac{1}{2}$.
        \item A $\ket{0}_e$ initialization fault at time $t-\frac{1}{2}$ together with a Pauli $X_e$ fault at time $t$. 
        \item A Pauli $Z_e$ edge fault at time $t+1$ together with a pair of $A_v$ measurement faults for $v\in e$ at time $t+\frac{1}{2}$. 
    \end{itemize}
    For $t=t_o$
    \begin{itemize}
        \item A stabilizer check operator $\tilde{s}_j$, $A_v$, or $B_p$ at time $t$.
        \item A pair of vertex Pauli $X_v$ faults at times $t,t+1,$ together with measurement faults on all checks $s_j$ that do not commute with $X_v$ at time $t+\frac{1}{2}$.
        \item A pair of vertex Pauli $Z_v$ faults at times $t,t+1,$ together with measurement faults on all checks $s_j$ that do not commute with $Z_v$ at time $t+\frac{1}{2}$.
        \item A Pauli $X_e$ edge fault at time $t,$ together with a measurement fault on check $Z_e$ at time $t+\frac{1}{2}$.
        \item A Pauli $Z_e$ edge fault at time $t$. 
        \item A Pauli $Z_e$ edge fault at time $t-1$ together with a pair of $A_v$ measurement faults for $v\in e$ at time $t-\frac{1}{2}$.
    \end{itemize}
\end{lemma}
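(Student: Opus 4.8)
The plan is to prove the two defining properties of the listed operators: that each one is a spacetime stabilizer, and that together they generate every \emph{local} spacetime stabilizer. Both parts lean on the detector (spacetime code) structure established in the preceding lemma and the syndrome tables in the Spacetime syndromes remark.

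First I would verify \emph{soundness}: each generator in the list violates no detector and leaves the measured value $\sigma$ unchanged. Triviality of the syndrome is a direct comparison against the syndrome tables. The entries of the form ``a stabilizer check $s_j$ (resp.\ $\tilde s_j$, $A_v$, $B_p$) at time $t$'' are instantaneous-code stabilizers and hence commute with every check whose value a detector compares. The entries of the form ``a pair of Pauli faults at times $t,t+1$ together with the induced measurement faults at $t+\tfrac{1}{2}$'' are built precisely so that the syndrome created by the space fault at round $t$ is cancelled by those measurement flips, the same holds at round $t+1$, and the two space faults cancel each other; the net effect is the identity on qubits and an even number of flips on every detector. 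The special entries at $t=t_i,t_o$ use the same cancellation together with three elementary equivalences: a $\ket{0}_e$ initialization fault is interchangeable with an $X_e$ fault placed just after initialization, a $Z_e$ readout fault is interchangeable with an $X_e$ fault placed just before readout, and an $A_v$ measurement fault can be absorbed against a $Z_e$ string because $A_v$ only begins (or ends) being measured at those slices, which is the ``condensation'' phenomenon. Finally, none of these fault patterns contains a logical representative of $L$ on the vertex qubits of any time slice — each is supported on instantaneous checks or on a contractible pair of consecutive slices — so the measured outcome is untouched.

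Second I would prove \emph{completeness}: every local syndrome-free fault configuration is a product of the listed generators. I would clean the configuration round by round. Away from $t_i$ and $t_o$, consecutive rounds measure identical checks, so multiplying by the ``Pauli at $t,t+1$ plus induced measurement flips'' generators transports any space fault from round $t$ to round $t+1$ and cancels any measurement fault against an adjacent space fault; iterating, I push all space faults to the final time slice together with any surviving measurement faults. The deformation slices are handled by the dedicated generators there, which let $B_p$- and $\tilde s_j$-flavoured disturbances pass through the deformation as a vertex-qubit Pauli error up to $A_v$-type spacetime stabilizers, while $A_v$ disturbances condense exactly as the $Z_e$-string generators describe; hence nothing obstructs transport across $t_i$ or $t_o$. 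After this cleaning the residual is a single Pauli $R$ on one time slice with no remaining measurement faults — an unpaired measurement fault flips the two detectors that sandwich it in time, so a syndrome-free configuration cannot leave one behind. Since $R$ must commute with every check of the instantaneous code at that slice, $R$ is either an instantaneous-code stabilizer or a logical representative; because the original configuration is \emph{local} and every logical representative has weight at least $d$, $R$ cannot be logical, so $R$ is a product of instantaneous checks, i.e.\ of the ``check at time $t$'' generators. Unwinding the transport expresses the original configuration as a product of listed generators, and since $R$ turned out to be a stabilizer rather than a logical, the configuration is genuinely a spacetime \emph{stabilizer}, consistent with the claim.

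The step I expect to be the main obstacle is the bookkeeping at the deformation slices $t_i$ and $t_o$: one must check that the generators given there really suffice to move an arbitrary localized fault across the boundary between the original and deformed codes, and in particular that every $B_p$- or $\tilde s_j$-flavoured disturbance produced by an $X_e$ fault can be re-expressed, modulo $A_v$-condensation generators and the initialization/readout equivalences, as a Pauli error supported on the vertex qubits alone — matching the ``passes through'' behaviour asserted in the Syndrome mobility remark. Extra care is needed because the deformed code can carry redundant $B_p$ checks (as in the BB-code examples), so ``product of checks'' must be read with respect to the fixed generating set, and the cleaning argument should invoke only relations valid for that generating set rather than for the full check group.
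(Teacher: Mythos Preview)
Your proposal is correct and follows essentially the same cleaning argument as the paper: transport all space faults (and their initialization/readout equivalents) through time using the ``pair of Paulis plus induced measurement flips'' generators, observe that the residual single-slice Pauli must commute with every instantaneous check, and conclude that what remains in the measurement-fault sector must be trivial. The one place you go beyond the paper is in making explicit why the residual $R$ is an instantaneous \emph{stabilizer} rather than a logical---you invoke locality together with the distance bound, whereas the paper simply asserts ``the product of the Pauli operators from all time steps involved must itself be a space stabilizer'' without spelling this out; your version is the cleaner justification of the same step.
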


\begin{proof}
    Any nonempty local spacetime stabilizer must involve a Pauli operator, or equivalent initialization or read-out error, as otherwise the stabilizer would have to include measurement errors on all repeated measurements of some check. 
    If a nontrivial local spacetime stabilizer contains a Pauli operator at some time, it must be a space stabilizer or contain a Pauli operator at another time, or an equivalent state intialization or measurement read-out error. 
    The product of the Pauli operators from all time steps involved must itself be a space stabilizer, where we are treating initialization and read-out errors as equivalent to some Pauli operator error. 
    Any local spacetime fault of this form can be constructed from a product of the spacetime stabilizers introduced above by first reconstructing the operators at the earliest time step at the cost of creating matching operators at the next time step, and so on until the final time step, where the product of the operators must now become trivial. 
    This leaves a local spacetime stabilizer with only measurement errors, which must also be trivial.
    Hence, the original fault pattern is a product of the introduced spacetime stabilizers as claimed.
\end{proof}

\begin{definition}[Spacetime fault-distance]
    The spacetime fault-distance is the weight, counted in terms of single-site Pauli errors and single measurement errors, of the minimal collection of faults that does not violate any detectors and is not a spacetime stabilizer.
\end{definition}

\begin{lemma}[Time fault-distance]\label{lem:time-fault-distance}
    The fault-distance for measurement and initialization errors is $(t_o-t_i)$. 
\end{lemma}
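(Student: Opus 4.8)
The plan is to establish matching bounds: exhibit a weight-$(t_o-t_i)$ timelike logical fault made only of measurement faults, and then show that no undetectable nontrivial configuration of measurement and initialization faults can be lighter. For the \textbf{upper bound} I would fix a genuine (non-dummy) vertex $v$ of $G$ and take $F_v$ to be a single $A_v$-measurement fault in every round of the deformation, at times $t_i+\tfrac12,\,t_i+\tfrac32,\dots,t_o-\tfrac12$ (that is $t_o-t_i$ faults). Using the syndrome table, an $A_v$-measurement fault touches only the detectors $A_v^{t}$, $A_v^{t+1}$; each interior detector $A_v^{t}$ with $t_i<t<t_o$ then sees two faulty measurements, and the extremal rounds are attached only to the single detectors $A_v^{t_i+1}$ and $A_v^{t_o-1}$ because the $A_v$ worldline condenses at the gauging and ungauging steps (there are no $A_v^{t_i}$ or $A_v^{t_o}$ detectors), so $F_v$ violates nothing. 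Since Algorithm~\ref{alg:GaugeLogical} records $\sigma=\prod_v\varepsilon_v$, corrupting the inferred value of $\varepsilon_v$ flips the reported logical outcome, so $F_v$ is not a spacetime stabilizer; this bounds the time fault-distance above by $t_o-t_i$.

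For the \textbf{lower bound} I would argue that any undetectable configuration $F$ built solely from measurement and initialization faults with $|F|<t_o-t_i$ is a spacetime stabilizer. The key observation is a decoupling: among all measurement and initialization faults, the detectors $A_v^{t}$ are sensitive \emph{only} to $A_v$-measurement faults — edge-qubit initialization/read-out faults behave like $X_e$-type space faults (they touch $B_p$ and $\tilde s_j$ detectors but never $A_v$ detectors), and $s_j$-, $\tilde s_j$-, $B_p$-measurement faults likewise miss all $A_v$ detectors. Hence, for each vertex $v$, the sub-collection of $A_v$-measurement faults in $F$ must on its own carry trivial $A_v$-syndrome. The $A_v$ worldline is a one-dimensional chain of $t_o-t_i$ measurement locations joined by $t_o-t_i-1$ detectors with free ends at $t_i$ and $t_o$, on which the only trivial-syndrome subsets are the empty one and the full worldline of weight $t_o-t_i$; since $|F|<t_o-t_i$, we conclude $F$ contains no $A_v$-measurement faults at all.

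It then remains to show such an $F$ — made only of $s_j$-, $\tilde s_j$-, $B_p$-measurement faults and edge-qubit initialization/read-out faults, and undetectable — is a spacetime stabilizer. Because the recorded outcome $\sigma$ depends only on the $A_v$ measurements (Algorithm~\ref{alg:GaugeLogical}), which $F$ leaves untouched, the only channel through which $F$ could change the reported result is the syndrome-based correction to $\sigma$; but $F$ is undetectable, so no correction is triggered, and the result is unchanged. By the definition of spacetime stabilizer, $F$ is therefore trivial. One can alternatively verify this against Lemma~\ref{lem:SpacetimeStabilizers}: every listed generator that involves an $A_v$-measurement fault also carries a Pauli fault, and any purely timelike product of generators uses a cycle of edges, whose vertex-boundary — hence whose induced $A_v$-fault pattern — vanishes mod $2$, so no nonempty pure-$A_v$-measurement-fault pattern is a spacetime stabilizer, while the remaining generators carry no $A_v$-measurement faults. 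Combining the two bounds gives time fault-distance exactly $t_o-t_i$.

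The step I expect to be the main obstacle is the lower bound, and within it the claim that nothing cheaper than a full $A_v$ worldline is simultaneously undetectable and nontrivial — in particular ruling out clever configurations that exploit the fact that the $B_p$ and $\tilde s_j$ worldlines are "capped" at $t_i$ and $t_o$ by genuine fault locations (edge-qubit initialization and read-out) rather than by perfect reference rounds. The argument above disposes of this by completely decoupling the $A_v$ sector and using that $\sigma$ is a function of the $A_v$ outcomes alone, but making the decoupling airtight requires carefully confirming from the detector/syndrome tables that no cross-interaction was missed and that "initialization error $=$ $X_e$-type space fault" indeed avoids every $A_v$ detector.
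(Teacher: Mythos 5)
Your overall strategy mirrors the paper's: undetectable configurations of measurement faults must form complete worldlines stretching from $t_i$ to $t_o$, the full $A_v$ worldline is a genuine logical fault of weight $t_o-t_i$ because it flips the inferred $\sigma=\prod_v\varepsilon_v$, and the competing $B_p/\tilde s_j$ worldlines are trivial. Your explicit decoupling of the $A_v$ sector --- checking from the syndrome table that, among measurement and initialization faults, only $A_v$-measurement faults touch $A_v$ detectors, and then using the open-chain structure of each $A_v$ worldline ($t_o-t_i$ measurements joined by $t_o-t_i-1$ detectors with free ends at the gauging and ungauging steps) --- is a correct and arguably cleaner way to organize the lower bound than the paper's.

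The gap is in your final step. You declare the residual fault $F$ (no $A_v$ faults, undetectable) to be a spacetime stabilizer because ``$\sigma$ depends only on the $A_v$ outcomes'' and no correction is triggered. But the result of the procedure, per Algorithm~\ref{alg:GaugeLogical} and the definition of a spacetime stabilizer, includes the post-measurement code state, not just the classical bit $\sigma$. Edge-initialization faults are residual $X_e$ Pauli errors on the state, and $Z_e$-readout faults feed into the byproduct-operator step and hence also act on the state as effective $X_e$ errors; an undetectable collection of these has edge support of the form $\delta T$ for a vertex set $T$ and is equivalent, modulo $A_v$ stabilizers, to $\prod_{v\in T\cap G_0}X_v$, which can be a nontrivial logical of the original code whenever the support of $L$ contains a sublogical --- all while leaving $\sigma$ and every detector untouched. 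So ``does not change $\sigma$ and is undetectable'' does not imply ``is a spacetime stabilizer.'' The paper closes this by explicitly decomposing the $B_p$ and $\tilde s_j$ worldline faults into products of the generators of Lemma~\ref{lem:SpacetimeStabilizers} (chains of paired $X_e$ faults plus the intervening measurement faults, capped by the initialization and readout generators), and by deferring the purely spacelike residue to Lemma~\ref{lem:spacedistance} via the space--time decoupling lemma. You need an analogous argument that the residual acts trivially on the codespace, not merely that it leaves $\sigma$ alone.
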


\begin{proof}
    With the convention that includes one round of perfect measurements at the initial and final step of the whole procedure, see Remark~\ref{rem:initialfinal}, all pure measurement logical faults must start and end on the code deformation steps. 
    Otherwise a measurement fault at time step $t$ must be followed and preceded by another measurement fault on the same type of check at time steps $t-1,t+1$. 
    At this point it follows that a measurement and initialization logical fault must have distance at least $(t_o-t_i)$ because that is the number of measurement rounds between $t_i$ and $t_o$.
    
    We now proceed to discuss the measurement logical faults more explicitly. 
    At the initial code deformation step logical measurement faults can terminate in two ways. 
    First, a string of $A_v$ measurement faults can terminate since a fault on the initial measurement of $A_v$ only violates the $A_v^{t_i}$ detector. 
    Second, a collection of measurement errors on the set of $B_p$ and $\tilde{s}_j$ checks that anticommute with some $X_e$ operator can terminate since a fault on a $\ket{0}_e$ edge initialization violates the $B_p^{t_i}$ and $\tilde{s}_j^{t_i}$ detectors for all $B_p$ and $\tilde{s}_j$ checks that anticommute with $X_e$. 
    Similarly, at the final code deformation step a string of $A_v$ measurement errors can terminate and an appropriate collection of $B_p$ and $\tilde{s}_j$ measurement errors can terminate. 
    From this, we see that all measurement and initialization logical faults are generated by repeated measurement errors on either a check $A_v$, or an appropriate collection of $B_p$ and $\tilde{s}_j$ checks, at all time steps between $t_i$ and $t_o$.  
    The logical fault given by repeated measurement error on a collection of $B_p$ and $\tilde{s}_j$ checks is in fact a trivial logical fault as it can be decomposed into a product of spacetime stabilizers that consist of an initialization or $X_e$ error at some time step, followed by a read-out or $X_e$ error at the next time step, and a collection of measurement errors on all $B_p$ and $\tilde{s}_j$ checks that anticommute with $X_e$ between the time steps. 
    On the other hand, a fault on all repeated measurements of an $A_v$ check results in a logical error as it changes the inferred value of the logical measurement. 
    Hence, the lower bound $(t_o-t_i)$ on the measurement fault-distance is saturated. 
\end{proof}

If we do not assume a round of perfect stabilizer check measurements at the start and end of the procedure, it is possible that there are additional logical measurement faults that extend from the initial and final step of the code deformation to the start and end of the whole procedure. 
For this reason we have included $d$ rounds of repeated stabilizer measurements in the undeformed code before and after the gauging measurement code deformation.

\begin{lemma}[Decoupling of space and time faults]
    \label{lem:spacetimedecoupling}
    Any spacetime logical fault is equivalent to the product of a space logical fault and a time logical fault, up to multiplication with spacetime stabilizers.
\end{lemma}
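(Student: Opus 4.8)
The plan is to reduce an arbitrary spacetime logical fault to a normal form in which space faults (Pauli errors) and time faults (measurement/initialization errors) are cleanly separated. Given a spacetime logical fault $F$, I would first group its elementary faults by whether they are Pauli space faults or measurement faults, writing $F = F_{\mathrm{sp}} \cdot F_{\mathrm{tm}}$ as sets of elementary faults (initialization and read-out errors being treated as space faults via the convention in Remark~\ref{rem:initialfinal}). The key point is that $F_{\mathrm{sp}}$ and $F_{\mathrm{tm}}$ are generally \emph{not} individually detector-free, so the goal is to modify each by spacetime stabilizers until they are.

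First I would analyze $F_{\mathrm{sp}}$. At each integer time step $t$, let $P_t$ be the product of all Pauli faults in $F$ occurring at time $t$ (including initialization/read-out errors reexpressed as Paulis). Using the generating spacetime stabilizers from Lemma~\ref{lem:SpacetimeStabilizers} --- specifically the ``pair of Pauli faults at times $t,t+1$ together with the measurement faults on anticommuting checks'' generators --- I can push all the Pauli support to a single reference time slice: multiply $F$ by these generators to cancel $P_{t_0}$ at the cost of creating matching Pauli operators and measurement faults at $t_0+1$, then iterate. Because $F$ is detector-free to begin with, after this sweep the accumulated Pauli operator on the final slice must itself be a valid space logical fault (a logical or stabilizer of the relevant static code at that time), and the measurement faults generated along the way are exactly those forced by the commutation relations. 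The residual object after extracting this ``space part'' is then a fault consisting only of measurement errors, which must be detector-free on its own, hence a time logical fault by the structure established in the time-fault-distance lemma. I would also need to handle the two code-deformation steps $t=t_i, t=t_o$ carefully, using the $t_i$- and $t_o$-specific generators in Lemma~\ref{lem:SpacetimeStabilizers} (in particular the initialization/$X_e$ and read-out/$X_e$ generators, and the $A_v$-condensation generators) so that the sweep passes cleanly through the gauging and ungauging steps; this is essentially the ``syndrome mobility / condensation'' discussion made precise.

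The main obstacle I expect is the bookkeeping at the deformation steps: unlike the bulk, where every Pauli fault has a symmetric partner one step later, at $t_i$ and $t_o$ the detector structure is asymmetric (e.g.\ an $A_v$ measurement fault at $t_i+\tfrac12$ violates only $A_v^{t_i+1}$, and $B_p,\tilde s_j$ measurement faults can propagate through while $A_v$ syndromes condense). So when the space-fault sweep reaches the deformation slice, I must verify that the leftover Pauli data can be absorbed consistently into a space logical fault on \emph{either} the original or the deformed code, with the induced measurement faults matching a genuine time logical fault (namely the repeated-$A_v$-measurement fault identified in the time-fault-distance lemma, possibly composed with the trivial repeated-$B_p/\tilde s_j$ fault). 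Checking that no detectors are left violated after the separation --- equivalently, that the product $F \cdot (F_{\mathrm{sp}})^{-1} \cdot (F_{\mathrm{tm}})^{-1}$ is a genuine spacetime stabilizer and not merely detector-free --- is the step that requires the most care, and it is where I would lean hardest on the explicit generating set of Lemma~\ref{lem:SpacetimeStabilizers}.
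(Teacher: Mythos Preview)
Your proposal is correct and follows essentially the same approach as the paper: sweep all Pauli faults to a single time slice using the pair-of-Paulis-plus-measurement-fault stabilizers from Lemma~\ref{lem:SpacetimeStabilizers}, handle the deformation steps with the special $t_i$/$t_o$ generators, and identify the residual as a time logical fault. The paper picks $t_i$ as the target slice and argues in the reverse order---first showing the residual measurement strings (which must end on $A_v$ faults at $t_o-\tfrac12$ or on $Z_e$ read-outs at $t_o+\tfrac12$, the latter then paired with $\ket{0}_e$ initialization faults via extra stabilizers) are genuine time logicals, then concluding by linearity that the remaining space part is a space logical---but this is precisely the bookkeeping you flagged as the delicate step.
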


\begin{proof}
    Consider an arbitrary spacetime logical fault $F$. 
    The space component of $F$, consisting of Pauli operators, can be cleaned into any single timestep $t_i \leq t \leq t_o$ via multiplication with spacetime stabilizers that involve like Pauli operator faults at time steps $t, t+1$ together with appropriate measurement faults. 
    In particular, we can clean the space component of the logical fault to time step $t_i$. 
    In the cleaned spacetime logical fault all measurement errors must occur in the time steps $t_i\leq t \leq t_o$ as measurement errors outside this time window must propagate to the initial or final time boundary, which has no measurement errors by assumption. 
    The measurement faults form strings that propagate through time from $t_i\pm \frac{1}{2}$ to $t_o\pm \frac{1}{2}$. 
    These strings must end either at time step $t_o-\frac{1}{2}$ on an $A_v$ measurement fault, or at $t_o+\frac{1}{2}$ on a $Z_e$ measurement fault.
    The strings ending on $A_v$ measurement faults are timelike logical faults. 
    The strings ending on a $Z_e$ measurement fault can all be assumed to originate from a corresponding $\ket{0}_e$ initialization fault at time step $t_i - \frac{1}{2}$ by multiplying with spacetime stabilizers that introduce pairs of $\ket{0}_e$ and $X_e$ faults. 
    After this equivalence, the strings ending on $Z_e$ measurement faults are also timelike logical faults. 
    Hence, up to spacetime stabilizer equivalence the original logical can be deformed into a product of timelike logical faults and a residual spacelike fault which must also be a logical fault due to linearity.
\end{proof}

\begin{theorem}[Spacetime fault-distance]\label{thm:spacetime-fault-distance}
    The spacetime fault-distance of the fault-tolerant gauging measurement procedure is $d$. Here we are assuming that a sufficiently expanding graph, satisfying $h(G)\geq 1$, and a sufficient number of rounds of repeated measurements, satisfying $(t_o-t_i)\geq d$, are used in the procedure. 
\end{theorem}

\begin{proof}
    First, consider a spacetime logical fault that is not equivalent to a spacelike logical fault. 
    Such logical faults must have support on all time steps $t_i<t<t_o$ and hence their distance is lower bounded by $d$, assuming $(t_o-t_i)\geq d$. 
    Now consider a spacetime logical fault that is equivalent to a spacelike logical fault. 
    Following the proof of Lemma~\ref{lem:spacetimedecoupling}, such a fault can be deformed into a spacelike logical fault at time step $t_i$ via multiplication with spacetime stabilizers that involve like Pauli operators at time steps $t,t+1$, and spacetime stabilizers at time step $t_i$ that introduce $\ket{0}_e$ initialization faults along with an $X_e$ Pauli faults. 
    Following Lemma~2 in the main text, the space distance of the resulting spacelike logical is lower bounded by $d$ assuming the Cheeger constant of $G$ satisfies $h(G)\geq 1$. 
    Undoing the spacetime stabilizer equivalence that cleaned the spacetime logical into a spacelike logical cannot reduce the distance, since the combined weight of Pauli and initialization faults cannot be reduced below that of the spacelike logical by the spacetime stabilizers that were used in the cleaning process. 
    This is because each such stabilizer preserves the parity of the space and initialization faults along the timeline ata fixed position. 
\end{proof}

Our proof of the spacetime fault-distance applies equally well if the plaquette checks $B_p$ are high weight, and if they are measured less frequently than every time step. 
In fact, it even holds if the $B_p$ detectors are only inferred once, via initialization and final readout, avoiding the need to measure high weight operators. 
However, in this case the procedure is likely not scalable in the sense that it likely does not have a threshold against uncorrelated random Pauli and measurement noise on all fault sites. 
Applications of this procedure to small fixed instances remains an interesting direction.

\end{document}